\newtheorem{lemma}{Lemma}
\author[1,3]{Alex Shamis}
\author[2]{Matthew Renzelmann}
\author[1]{Stanko Novakovic}
\author[4]{Georgios Chatzopoulos\footnote{Work done while at Microsoft Research}}
\author[5]{Anders T. Gjerdrum\raise0.35ex\hbox{{\footnotesize *}}}
\author[6]{Dan Alistarh\raise0.35ex\hbox{{\footnotesize *}}}
\author[1]{Aleksandar Dragojevi\'{c}}
\author[1]{Dushyanth Narayanan}
\author[1]{Miguel Castro}
\affil[1]{Microsoft Research}
\affil[2]{Microsoft}
\affil[3]{Imperial College London}
\affil[4]{EPFL}
\affil[5]{UIT: The Arctic University of Norway}
\affil[6]{IST Austria}
\newenvironment{smitemize}%
  {\begin{list}{$\bullet$}%
     {\setlength{\parsep}{0pt}%
      \setlength{\topsep}{0pt}%
      \setlength{\itemsep}{0pt}}}%
  {\end{list}}
\newenvironment{captiontext}{%
   \begin{center}%
     \begin{minipage}{0.9\linewidth}%
         \footnotesize}%
   {%
      \end{minipage}%
        \end{center}}
\newcommand{\unit}[1]{\,{#1}}
\newlength{\figurewidth}
\newlength{\widefigurewidth}
\newlength{\figbottrim}
\newlength{\graphbottrim}
\newcommand{\figtrimvals}{\setlength{\figbottrim}{8cm}\setlength{\graphbottrim}{2cm}}
\newcounter{para}[subsection]
\date{}
\begin{document}

\sloppypar

\def\baseline{BASELINE}
\def\opaque{FaRMv2}
\def\farmorig{FaRMv1}
\def\farmopaque{FaRMv2}

\title{\textbf{Fast General Distributed Transactions with Opacity using Global Time}} 

\maketitle

\begin{abstract}
Transactions can simplify distributed applications by hiding data
distribution, concurrency, and failures from the application
developer. Ideally the developer would see the abstraction of a single
large machine that runs transactions sequentially and never
fails. This requires the transactional subsystem to provide {\em
  opacity} (strict serializability for both committed and aborted
transactions), as well as transparent fault tolerance with high
availability. As even the best abstractions are unlikely to be used if
they perform poorly, the system must also provide high performance.

Existing distributed transactional designs either weaken this
abstraction or are not designed for the best performance within a data
center. This paper extends the design of FaRM --- which provides
strict serializability only for committed transactions --- to provide
opacity while maintaining FaRM's high throughput, low latency, and
high availability within a modern data center. It uses timestamp
ordering based on real time with clocks synchronized to within tens of
microseconds across a cluster, and a failover protocol to ensure
correctness across clock master failures. FaRM with opacity can commit
5.4 million neworder transactions per second when running the TPC-C
transaction mix on 90 machines with 3-way replication.
\end{abstract}

\section{Introduction}

Cloud data centers provide many relatively small, individually
unreliable servers. Cloud services need to run on clusters of such
servers to maintain availability despite individual server
failures. They also need to scale out to increase throughput beyond
that of a single server. For latency-sensitive applications that need
to keep data in main memory, scale-out is also required to go beyond
the memory limits of a single server.

The challenge is that distributed applications, especially stateful
ones, are much harder to program than single-threaded or even
multi-threaded applications. Our goal is to make them easier to
program by providing the abstraction of a single large machine that
runs transactions one at a time and never fails.  This requires a
distributed transactional system with the following
properties:
\begin{smitemize}
\item {\em Serializability}: All executions are equivalent to
some serial ordering of committed transactions.
\item {\em Strictness}: This ordering is consistent with real
  time.
\item {\em Snapshot reads}: All transactions see a consistent snapshot of the
  database until they commit or abort.
\item {\em High availability}: The system recovers transparently
from server failures and downtimes are short enough to appear as
transient dips in performance.
\end{smitemize}
The combination of the first three properties is also referred to as {\em
  opacity}~\cite{opacity}. Intuitively, opacity extends the properties
of strict serializability to aborted transactions, i.e., these
transactions also see a consistent snapshot at a point in time
consistent with real-time ordering, until they abort.

As even the best abstractions are unlikely to be used if they perform
poorly, the system must also provide scalability and high
performance. Existing designs either weaken this abstraction or are
not designed for the best performance within a data
center. Spanner~\cite{spanner} is a geo-distributed database that
provides opacity with availability but does
not provide low latency and high throughput in the data center.
Several transactional
systems~\cite{farm_sosp,drtm_eurosys,namdb,fasst} have leveraged large
amounts of cheap DRAM per server, fast commodity networking hardware,
and RDMA to achieve good performance in the data center. RDMA can
improve networking throughput and latency by orders of magnitude
compared to TCP~\cite{farm_nsdi}. One-sided RDMA can reduce CPU costs
and latency further compared to two-way messaging using RDMA, as it
bypasses the remote CPU. Several distributed transactional
protocols~\cite{farm_sosp,drtm_eurosys,namdb} use one-sided RDMA to
send fewer messages and achieve higher performance than two-phase
commit (2PC) within the data center.

Current designs that use one-sided RDMA do not provide opacity.
\farmorig{}~\cite{farm_sosp,farm_nsdi} and
DrTM~\cite{drtm_sosp,drtm_eurosys} provide scalability,
availability, and strict serializability for committed but not for
aborted transactions. Optimistically executing transactions in these
systems might read inconsistent state with the guarantee that such
transactions would eventually abort. NAM-DB~\cite{namdb} provides read
snapshots but not strictness, serializability, or high availability.

In this paper, we describe \farmopaque{}, which extends the original
design and implementation of \farmorig{} to provide read
snapshots to all transactions. FaRMv2 uses a novel timestamp-ordering
protocol that leverages the low latency of RDMA to synchronize clocks.
Timestamps are based on real time, which scales well as it allows
machines to use their local clocks to generate timestamps. However,
since clocks are not perfectly synchronized, the transaction protocol
must ``wait out the uncertainty'' when generating read and write
timestamps, which introduces latency.  \farmopaque{} leverages
low-latency, CPU-efficient RDMA-based communication to synchronize
clocks frequently over the network to achieve uncertainties in the
tens of microseconds, two orders of magnitude lower than in
Spanner~\cite{spanner}. Unlike Spanner, \farmopaque{} does not require
atomic clocks or GPS. Instead, servers use the CPU cycle counter and
synchronize with a clock master elected from all the servers in the
system. Timestamp ordering is maintained across clock master failures
using a clock master failover protocol. Our design and implementation
also supports multi-versioning which improves the performance of
read-only transactions. Old versions are kept in memory with efficient
allocation and garbage collection.

The paper makes the following novel contributions:
\begin{smitemize}
\item A mechanism to synchronize clocks to within tens of microseconds
by leveraging RDMA.
\item A transaction protocol with opacity that uses global time and one-sided RDMA.
\item An informal proof of correctness for the transaction protocol.
\item A clock failover protocol that keeps timestamps monotonic
  across clock master failures without requiring special hardware such
  as atomic clocks or GPS.
\item An efficient thread-local, block-based allocator and
  garbage collector for multi-versioning.
\item Other uses of global time for object allocation and for reducing memory overhead at on-disk backups. 
\end{smitemize}

\farmopaque{} can commit 5.4 million neworder transactions per second
when running the TPC-C transaction mix on a cluster of 90 machines
with 3-way replication for fault tolerance. It retains the high
availability of \farmorig{} and can recover to full throughput within
tens of milliseconds of a server failure.  We believe \farmopaque{}
has the highest known throughput for this transaction mix of any
system providing opacity and high availability.

\farmopaque{}'s performance and simple programming model provide a good
platform for developing interactive, large-scale, distributed applications.
We describe how the A1 graph database, which is part of Microsoft's Bing
search engine, is built on \farmopaque{} in Section~\ref{sec:a1}.

\section{Motivation}
\label{sec:motiv}

Serializability is an easy isolation level for developers to
understand because it avoids many anomalies. We found that {\em
  strictness} and {\em opacity} are also important for developers
using a transactional platform.

Strict serializability~\cite{strict-serializability} means that the
serialization order of transactions corresponds to real time. If $A$
completes before $B$ starts, then any correct execution must be
equivalent to a serial one where $A$ appears before $B$. This is
important when clients communicate using some channel outside the
system as is often the case, for example, when other systems are
layered on top of the database.

Opacity~\cite{opacity} is the property that transaction executions are
strictly serializable for aborted transactions as well as committed
transactions.  This simplifies programming by ensuring that invariants
hold during transaction execution. Many existing systems provide
opacity either by using pessimistic concurrency control with read
locks (e.g., Spanner~\cite{spanner}), or by using timestamp ordering
to provide read snapshots during execution (e.g.,
Hekaton~\cite{hekaton_vldb12,hekaton_sigmod13}).  But many systems
that use optimistic concurrency control (OCC)~\cite{opt-cc} do not
provide read snapshots for aborted transactions, e.g.,
\farmorig{}~\cite{farm_sosp,farm_nsdi} and
DrTM~\cite{drtm_sosp,drtm_eurosys}.  This design decision can
improve performance but it imposes a large burden on developers.
Since developers cannot assume invariants hold, they must program
defensively by checking for invariants explicitly in transaction code.
Relational databases reduce this burden by providing mechanisms to
check constraints automatically after each SQL statement, but this can
add a non-trivial performance overhead and it still requires the
developer to write all the relevant constraints.

FaRM and DrTM provide a low level transactional memory model rather
than a relational model. This allows developers to write highly
performant C++ code to manipulate arbitrary pointer-linked data
structures in transactions, but this flexibility comes at a price. It
is not feasible to do efficient automatic constraint checking after
every C++ statement.  Additionally, lack of opacity can lead to
violations of memory safety.  For example, a transaction could read
memory that has been freed and reused, which could cause a crash or an
infinite loop. We illustrate the difficulty of programming without
opacity by discussing the implementation of hash table and B-tree
indices on top of FaRM.

The FaRM hash table~\cite{farm_nsdi} uses chained associative
hopscotch hashing. Each key lookup reads two adjacent array buckets and
zero or more overflow buckets. \farmorig{} ensures that each of these objects 
is read atomically, but they may not all be read from
the same consistent snapshot because \farmorig{} does not ensure opacity.
This can lead to several anomalies, for example,  
a concurrent transaction could move
key $A$ from an overflow bucket to an array bucket while deleting key
$B$, causing the lookup transaction to incorrectly miss $A$. \farmorig{}
solves this problem by adding 64-bit incarnations to all object
headers, replicating them in all overflow bucket pointers, and adding
additional version fields to each array bucket. This
adds complexity and overhead, which could be avoided by
providing opacity.

The FaRM B-Tree implementation keeps cached copies
of internal nodes at each server to improve performance.  Fence
keys~\cite{fencekeys_tods81,fencekeys_vldb04} are used to check the
consistency of parent-to-child traversals.  Strict serializability is
maintained by always reading leaf nodes uncached and adding them to
the read set of the transaction.  The cached internal nodes are shared
read-only across all threads without making additional thread-local or
transaction-local copies. This is extremely efficient as most lookups
require a single uncached read, and do not make copies of
internal nodes. However,
lack of opacity can lead to several anomalies when using this B-tree,
for example, one developer reported that they had found a bug because
``They inserted a key into a B-Tree, looked up the same key in the
same transaction, and it was not there.'' On investigation, we found
that this was possible when a concurrent transaction created a split
in the B-Tree that migrated the key in question ($A$) to a new leaf
object, deleted $A$, and the server running the original transaction
evicted some internal nodes on the path from the root to the new leaf
from the cache. Even though the original transaction would have
aborted in this case, the programmer still needs to reason about
execution before a transaction aborts.  Reasoning about complex corner
cases like this one is hard.  Opacity simplifies programming by
providing strong isolation guarantees even for transactions that
abort.

The main contribution in this paper is adding opacity to FaRM to improve
programmability while retaining good performance.  It is hard to
quantify the benefit of providing a better developer experience. Based
on our deployment experience --- more than two years of FaRMv1
followed by more than two years of FaRMv2 --- we can say that our
developers praised the addition of opacity and we no longer see bug
reports due to opacity violations.  We were also able to remove the
additional version fields per array bucket in the hash table and
convert the ``fat pointers'' for overflow chains to normal pointers,
which simplified the code and reduced space usage.

\section{Background}

\subsection{FaRM}

FaRM~\cite{farm_sosp,farm_nsdi} provides a transactional API to access
objects in a global flat address space that pools together the memory
of a cluster. The API is exposed as library calls, and both
application code and FaRM run within the same process on each machine.
Within a transaction, the application can allocate, free, read and
write objects regardless of their location in the cluster, as well as
execute application logic.  The thread executing the code for a
transaction also acts as the coordinator for the distributed commit of
that transaction. The execution model is symmetric: all threads in a
cluster can be coordinators and all servers can hold in-memory
objects.

FaRM objects are replicated using primary-backup replication. The unit
of replication is a {\em region} (e.g., 2\unit{GB}). All objects in
a region have the same primary and backup servers.  

FaRM implements optimistic concurrency control to enable using
one-sided RDMA to read objects from remote primaries during
transaction execution.  Locking remote objects would require using the
remote CPU or using additional atomic RDMA operations.  So no locks
are taken during transaction execution.  Writes are buffered within
the transaction context. At the end of the execution phase, the
application calls {\sc commit} invoking the commit protocol.  The
commit protocol integrates concurrency control and replication for
fault-tolerance to achieve lower message counts and fewer round trips
than approaches which build distributed commit on top of
replication~\cite{spanner}. The commit protocol first locks write sets
at their primary replicas and then {\em validates} read sets to ensure
serializability.

FaRM has transparent fault tolerance with high availability through
fast failure detection, a reconfiguration protocol for adding/removing
machines, parallel transaction recovery after failure, and
background data re-replication to restore replication levels. Unlike
traditional 2PC, FaRM does not block transactions when a coordinator
fails: coordinator state is recovered in parallel from logs on
participants.

\subsection{One-sided RDMA}

CPU is the
bottleneck when accessing in-memory data using the fast networking
hardware deployed in data centers today.
So FaRM uses one-sided RDMA operations, which are
handled entirely by the remote NIC, to improve performance.  
Remote objects are read using RDMA reads from their primary replica
during transaction execution and read set validation uses RDMA reads of object
versions from the primary.  Unlike traditional 2PC protocols,
primaries of read-only participants do no CPU work in FaRM because RDMA
requests are served by their NICs. Backups of read-only participants
do no work on the CPU or on the NIC. Backups of write-set objects do
not do any CPU work on the critical path of the
transaction; coordinators do a single one-sided RDMA write to each
backup to commit a transaction, and only wait
for the hardware acknowledgement from the NIC. This commit message is
processed asynchronously by the backup's CPU.

There has been a lively debate on the 
merits of one-sided RDMA~\cite{herd,fasst,sequencer,erpc,hybrid-rdma}.
The key issue is that one-sided operations 
and deployed congestion control mechanisms~\cite{dcqcn,timely}
require the reliable connected (RC) mode of RDMA. 
This requires per-connection (queue pair) state which grows with
the size of the cluster, and degrades performance when the state
cannot be cached on the NIC and must be fetched from host
memory instead. 
Sharing connection state across cores can improve
scalability but adds CPU synchronization costs~\cite{farm_nsdi}.

An alternative approach, eRPC~\cite{erpc}, uses connectionless
unreliable datagrams (UD). These scale better at the NIC level because a
single endpoint (queue pair) can send and receive to all servers in
the cluster~\cite{herd,fasst,erpc}.  This approach requires two-sided
messaging, as one-sided RDMAs are not supported over UD. It uses an
RTT-based congestion control mechanism~\cite{timely} implemented in
software. 

The scalability of RC has been improving with
newer NICs.  The RDMA performance of the Mellanox CX3 starts dropping at 256
connections per machine. More recent NICs (CX4, CX5, and CX6) have
better scalability. We measured the scalability of RDMA reads on CX4
RoCE NICs between two machines connected by a 100\unit{Gbps} switch.  We
emulated larger clusters by increasing the number of queue pairs per
machine. We compared this with the performance of 64-byte reads over
eRPC using one queue pair per thread, which is 15 million reads/s. The
RDMA throughput is 35 million reads/s with 28 queue pairs per machine,
and RDMA reads perform better than eRPC with up to 3200 queue pairs
per machine, where they equal eRPC performance. We do not have CX5
hardware and could not measure the performance of 64-byte reads on
CX5. With a small number of queue pairs, RDMA writes~\cite{erpc} and
reads~\cite{anuj_personal} on CX5 have up to 2x higher throughput than
eRPC for reads between 512 bytes and 32\unit{KB}. For larger transfers
both approaches are limited by the line rate.

These results and other recent work~\cite{hybrid-rdma} show that
one-sided RDMA can provide a significant performance advantage for
moderate size clusters.  So we added opacity to FaRM while
retaining all the one-sided RDMA optimizations.

\section{Design}

\subsection{Global time}

Using one-sided RDMA reads during execution makes it challenging to
provide opacity with scalability.
Pessimistic concurrency control schemes such as Spanner~\cite{spanner}
provide opacity by using read locks but this
requires two-way messaging and remote CPU usage on read-only
participants.

Timestamp ordering enables opacity by allowing transactions to read a
consistent snapshot defined by a read timestamp.
The challenge is to generate timestamps scalably and with global
monotonicity, i.e. the timestamp order must match the real time order
in which the timestamps were generated across all servers.

Centralized sequencers do not scale to our target transaction rates.
A state of the art centralized sequencer without fault
tolerance~\cite{sequencer} can generate 122 million timestamps per
second. \farmorig{} can execute 140 million TATP transactions per
second on 90 machines~\cite{farm_sosp}.

NAM-DB~\cite{namdb} uses a centralized timestamp server and caches
timestamps at servers to avoid generating new timestamps per
transaction. This improves scalability but it means that timestamps
are not globally monotonic: timestamps generated on different servers
will not be in real time order. Using non-monotonic timestamps as
transaction read and write timestamps violates strictness.

Clock-SI~\cite{clocksi} has no centralized time server but uses
loosely synchronized physical clocks on each server. Remote reads in
Clock-SI block until the remote server's clock moves past the
transaction read timestamp. This operation is not supported on RDMA
NICs and requires two-sided messaging. Clock-SI is also not globally
monotonic.

Spanner~\cite{spanner} uses Marzullo's algorithm~\cite{marzullo_time}
to maintain globally synchronized real time. Servers synchronize their
local clocks with a time master periodically. The algorithm accounts
for synchronization uncertainty explicitly by representing time as an
interval, which is computed using the round trip time of
synchronization requests, the master time returned by requests, and an
assumed bound on the rate drift of the local clock. Time masters use
atomic clocks and/or GPS that are synchronized with global real time,
and there are multiple clock masters for fault tolerance.  Servers
synchronize with clock masters every 30 seconds. Spanner's
uncertainties are 1--7\unit{ms}, which is acceptable in a
geo-distributed database where latencies are dominated by WAN round
trips.

These uncertainties are too high for FaRM, which is designed for
sub-millisecond transaction latencies using low latency RDMA networks
to scale out within a data center. We also did not want to depend on
atomic clocks and GPS as they are not widely available in all data
centers.

We also use timestamp ordering based on real time, with clock
synchronization using Marzullo's algorithm~\cite{marzullo_time} but
with a design and implementation that provide an average uncertainty
below 20 microseconds, two orders of magnitude lower than in Spanner.
We use only the cycle counters present on all CPUs, allowing any
server in a cluster to function as a clock master (CM) without
additional hardware.

Non-clock masters periodically synchronize their clocks with the
current clock master using low-latency, CPU-efficient RPCs based on
RDMA writes~\cite{farm_nsdi}. Round-trip times are in the tens of
microseconds even under load and a single clock master can handle
hundreds of thousands of synchronization requests per second while
also running the application without noticeable impact on
application throughput. 

Marzullo's algorithm assumes an asynchronous network. When a non CM
fetches the time from a CM over such a network, it can only assume
that the one-way latencies of the request and response messages are
non-negative. The true time at the CM (as computed at the non-CM) thus
lies in an interval defined by the two extreme possibilities shown in
Figure~\ref{fig:sync_drift}. The time elapsed between sending the
request and receiving the response is measured on the local clock of
the non CM. Marzullo's algorithm assumes that clock rates can differ
across machines but that the maximum relative difference, or {\em
  drift}, is bounded by some known value $\epsilon$. The upper bound
on the time must therefore be adjusted as shown in the figure.

\begin{figure}
    \centering
    \figtrimvals
    \includegraphics[trim=0 5cm 6cm 0,clip,width=\textwidth]{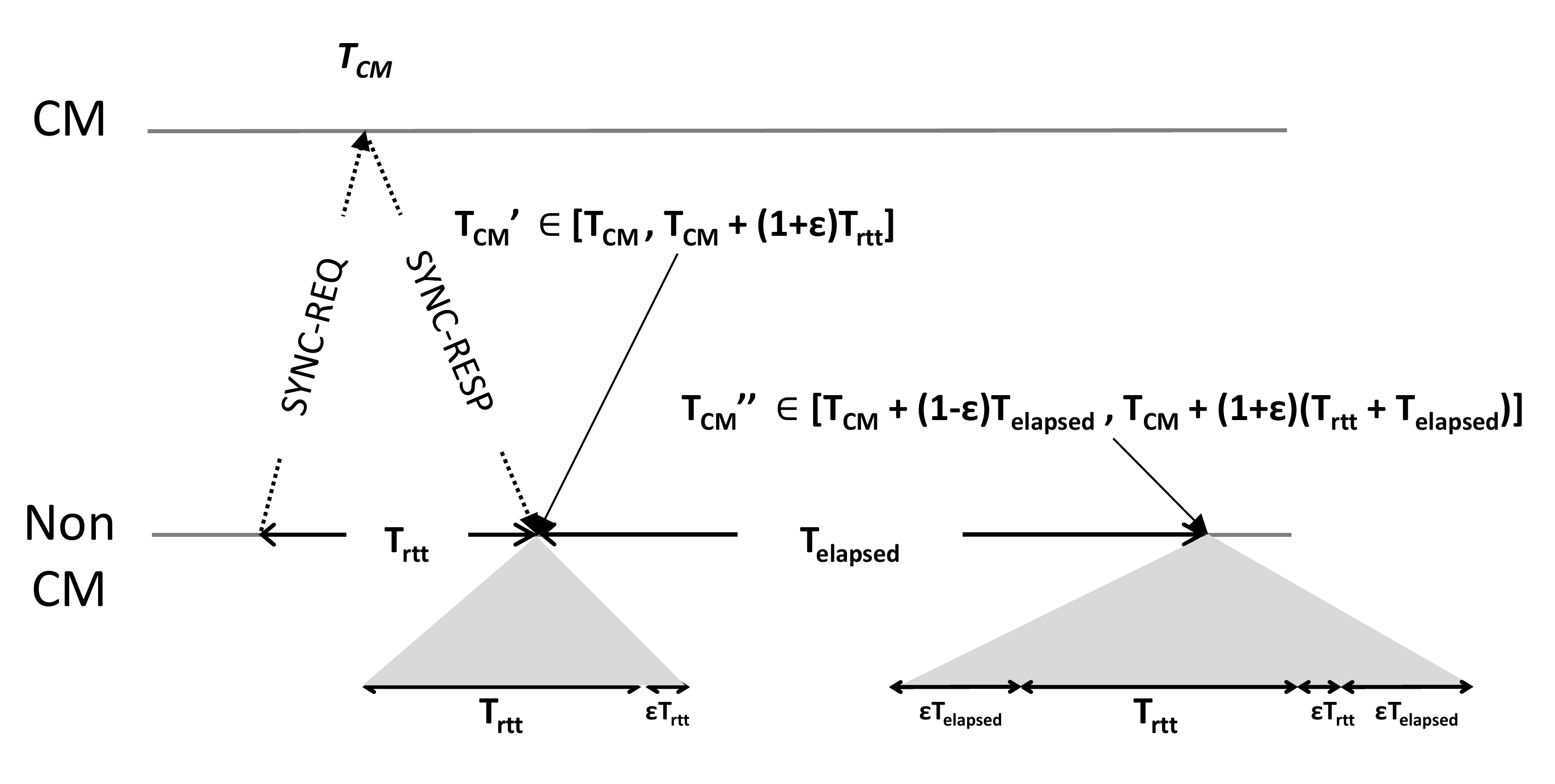}
    \caption{Synchronization and uncertainty in \farmopaque{}}
    \label{fig:sync_drift}
\end{figure}

Once time has been successfully synchronized, this information can be
used to compute a time interval using only the local clock, without
network synchronization. However, both upper and lower bounds must be
extended to account for potential drift; this additional uncertainty
grows linearly with the time elapsed since
synchronization. Figure~\ref{fig:sync_drift} shows how a non CM
computes bounds on the time at the CM ($T_{CM}'$)
immediately after a synchronziation as well as after some time
has elapsed since a synchronization ($T_{CM}''$).

For simplicity, the figure only shows an uncertainty interval computed
from the most recent synchronization. However, this is not necessary
--- any past synchronization in the current configuration can be used
to compute the current time interval. Moreover, the best
synchronization to use (for the tightest bounds) is not always the
most recent, and is not always the same for the upper bound and lower
bound. We used an optimized variant of the algorithm that computes the
optimal (tightest upper and lower bounds) time interval based on all
available information from past synchronizations.  The algorithm is
implemented efficiently by maintaining the state from up to two past
synchronizations: one that gives the highest lower bound ($S_{lower}$) and one that
gives the lowest upper bound ($S_{upper}$). Figure~\ref{fig:marzullo} shows this
optimized algorithm.

\begin{figure}
\begin{algorithmic}
\Function{LB}{$S, T$}           \Comment Compute lower bound
    \State \Return $S.T_{CM} + (T - S.T_{recv})(1-\epsilon)$
\EndFunction
\Function{UB}{$S, T$}           \Comment Compute upper bound
    \State \Return $S.T_{CM} + (T - S.T_{send})(1+\epsilon)$
\EndFunction
\Function{SYNC}{}               \Comment Synchronize with clock master
   \State $S_{new}.T_{send} \gets$ \Call{LOCALTIME}{$ $}
   \State $S_{new}.T_{CM} \gets$ \Call{MASTERTIME}{$ $} \Comment RPC to clock master
   \State $T_{now} \gets$ \Call{LOCALTIME}{$ $}
   \State $S_{new}.T_{recv} \gets T_{now}$

   \If{\Call{LB}{$S_{new}, T_{now}$} > \Call{LB}{$S_{lower}, T_{now}$}}
       \State $S_{lower} \gets S_{new}$  \Comment Update to improve lower bound
   \EndIf

   \If{\Call{UB}{$S_{new}, T_{now}$} < \Call{UB}{$S_{upper}, T_{now}$}}
       \State $S_{upper} \gets S_{new}$\Comment Update to improve upper bound
   \EndIf
\EndFunction
\Function{TIME}{}               \Comment Compute time interval based on synchronization state
   \State $T_{now} \gets$ \Call{LOCALTIME}{$ $}
   \State $L \gets$\Call{LB}{$S_{lower}, T_{now}$}
   \State $U \gets$\Call{UB}{$S_{upper}, T_{now}$}
   \State \Return $\left[L,U\right]$
\EndFunction
\end{algorithmic}
    \caption{\farmopaque{}'s synchronization algorithm}
    \label{fig:marzullo}
\end{figure}

Synchronization state is kept in a shared memory data structure on
each machine. Any thread can read this state to compute a time
interval using the function {\sc TIME}. We also allow any thread to synchronize with the clock
master and update the state. For threads that also run application
code, these requests are interleaved non-preemptively with execution
of application code. This leads to scheduling delays in processing
synchronization requests and responses causing high uncertainty. Hence
in practice, we only send synchronization requests from a single,
high-priority thread that does not run application code. \farmorig{}
already uses such a high-priority thread to manage leases for failure
detection, and we use the same thread to send synchronization requests
in \farmopaque{}.

We assume that cycle counters are synchronized across threads on a
server to some known precision. On our platform, the OS
(Windows) synchronizes the cycle counters
 across threads to within 1024 ticks (about 400\unit{ns}). This
uncertainty is included in the interval returned to the
caller.

Time intervals in \farmopaque{} are globally monotonic: if an event at
time $\left[L_1,U_1\right]$ anywhere in the cluster happens-before an
event at time $\left[L_2,U_2\right]$ anywhere in the cluster, then
$U_2 > L_1$. We also guarantee that on any thread, the left bound $L$
is non-decreasing.

Frequent synchronization allows us to use conservative clock drift
bounds.  Currently we use a bound of 1000 parts per million (ppm),
i.e. 0.1\%. This is at least 10x higher than the maximum allowed by
the hardware specification on our servers and at least 10x higher than
the maximum observed rate drift across 6.5 million server hours on our
production clusters with more than 700 machines.

Correctness in \farmopaque{} requires clock frequencies to stay within
these bounds. We use the local CPU clock on each machine, which on
modern hardware is based on extremely accurate and reliable crystal
oscillators. In rare cases, these can be faulty at manufacture: we
detect these cases using an initial probation period when a server is
added to the cluster, during which we monitor clock rates but do not
use the server. Clock rates can also change slowly over time due to
aging effects and temperature variation (e.g.,~\cite{clock_aging}). 
\farmopaque{} continuously monitors
the clock rate of each non-CM relative to the CM. If this exceeds
200\unit{ppm} (5x more conservative than the bound we require for correctness), 
it is reported to a centralized service that removes
either the non-CM or the CM, if the CM is reported by multiple servers.

\subsection{\farmopaque{} commit protocol}

Figure~\ref{fig:commit_protocol} shows \farmopaque{}'s 
transaction protocol as a time diagram for one example
transaction. The line marked $C$ shows
the coordinator thread for this transaction.  The other lines show
other servers with primaries and backups of objects accessed by the
transaction. \farmopaque{} uses primary-backup replication. In this
example the transaction reads two objects $O_1$ and $O_2$ and
writes a new value $O_1'$ to $O_1$. Each object is replicated on one
primary and one backup.  The backup for $O_2$ is not shown as backups
of objects that are read but not written do not participate in the
protocol.

\begin{figure}
    \centering
    \figtrimvals
    \includegraphics[trim=0 \figbottrim{} 0 0,clip,width=\textwidth]{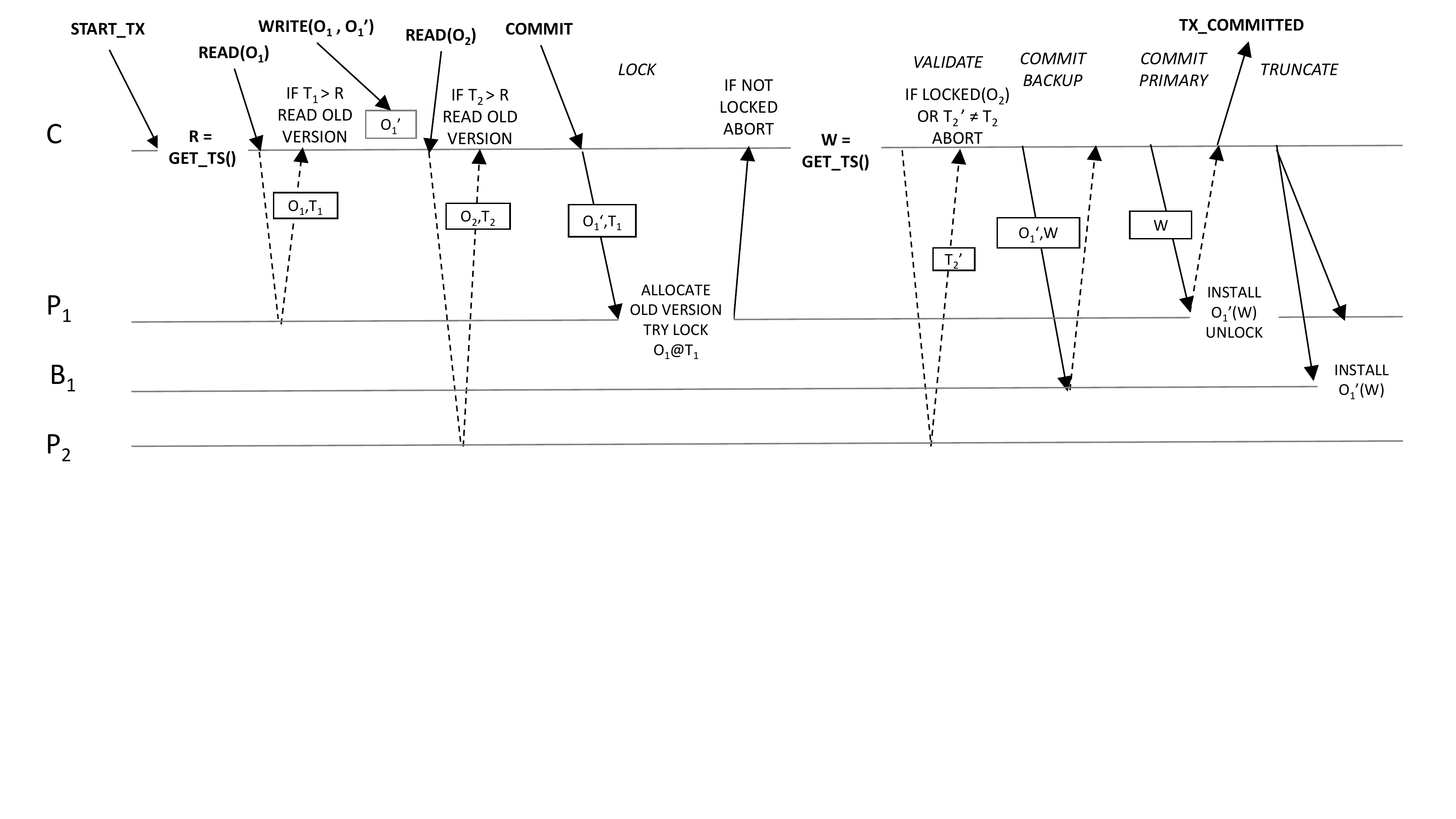}
    \begin{captiontext}
      Execution and commit timeline for a transaction that reads two objects and
      writes one of them. Solid arrows show RDMA writes.
      Dashed arrows show RDMA reads and hardware acks for RDMA
      writes.
    \end{captiontext}
    \caption{\farmopaque{} commit protocol}
    \label{fig:commit_protocol}
\end{figure}

Transactions obtain a read timestamp when they start executing and
transactions that modify objects obtain a write timestamp when they
commit. \farmopaque{} serializes transactions in timestamp order:
read-only transactions are serialized by their read timestamp and
read-write transactions are serialized by their write timestamp.

The application starts a transaction by calling {\sc start\_tx}, which
acquires the read timestamp $R$. The reads of the transaction are then
performed at time $R$, i.e., successful reads see the version with
the highest write timestamp that is less than or equal to $R$. The time
at the CM must exceed $R$ before the first read can be issued, to
ensure that no future writes can have a timestamp less than or equal
to $R$. This is necessary for opacity. The time at the CM must also be
less than or equal to $R$ at the time {\sc start\_tx} is invoked. This is
necessary for strictness, to avoid reading a stale snapshot.
\farmopaque{} satisfies both conditions by setting $R$ to the upper
bound of the time interval when {\sc start\_tx} is called and waiting out
the uncertainty in this interval before the first read is issued
(Figure~\ref{fig:acquire_ts}).

\begin{figure}
\begin{algorithmic}
\Function{GET\_TS}{}
    \State $\left[L,U\right] \gets$ \Call{TIME}{$ $} \Comment $U$ is in the future here.
    \State \Call{SLEEP}{$(U-L)(1+\epsilon)$} \Comment Wait out uncertainty.
    \State \Return $U$ \Comment $U$ is now in the past.
\EndFunction
\end{algorithmic}
    \begin{captiontext}
    $\epsilon$ is the clock drift bound.
    \end{captiontext}
    \caption{Timestamp generation (strict serializability)}
    \label{fig:acquire_ts}
\end{figure}

By including an ``uncertainty wait'' when acquiring a read 
timestamp $R$, we ensure that the time at the clock master equalled
$R$ at some point during the call to $GET\_TS$. Intuitively
an uncertainty wait blocks the transaction until the time interval
at the end of the wait no longer overlaps the time interval
at the beginning of the wait (Figure~\ref{fig:uncertainty_wait}).

\begin{figure}
    \centering
    \figtrimvals
    \includegraphics[trim=0 12cm 15cm 0,clip,width=\textwidth]{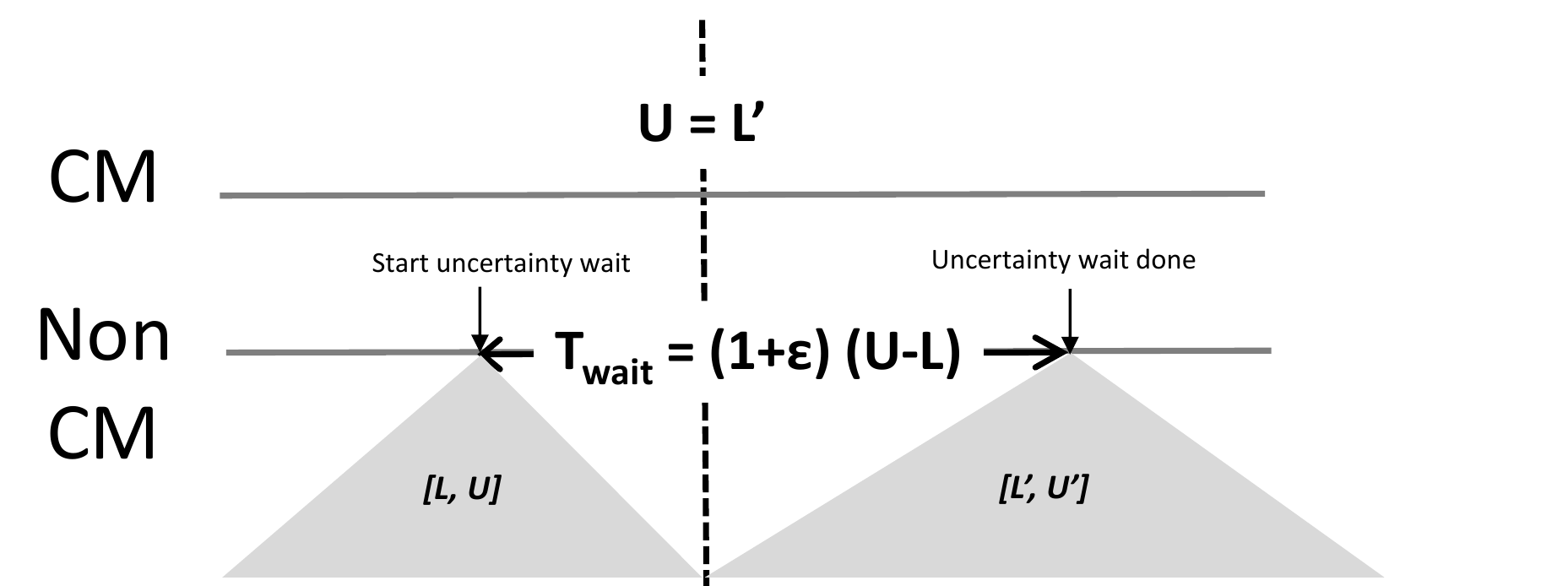}
    \caption{Uncertainty wait in \farmopaque{}}
    \label{fig:uncertainty_wait}
\end{figure}

The application can then issue reads and writes for objects in the
global address space.  Reads are always from the primary replica. If
the read object has a timestamp greater than $R$, \farmopaque{} will
find and return the correct snapshot version of the object (not shown
in the figure).

Writes to objects are buffered locally.  When the application calls
{\sc commit}, the coordinator sends {\sc lock} messages to all
primaries that have objects in the write set. If the versions of all
these objects equal the versions read and they could be successfully
locked, the coordinator acquires a write timestamp $W$. $W$ must be
greater than the time at the CM when all write locks are taken, to
ensure that any conflicting reader with read timestamp $R' \geq W$
will see either the lock or the eventual commit. $W$ must also be less
than the time at the CM when read validation begins, to ensure that
the transaction will see any conflicting writer with write timestamp
$W' \leq W$. Both conditions are satisfied by computing $W$ using {\sc GET\_TS} to wait out the uncertainty
(as for the read timestamp): acquiring a timestamp in
the future and then waiting until the timestamp has gone into the past.

Note that we do not send {\sc lock} messages to backups. This makes
writes more efficient but it means that reads of the latest version of
an object must always be from the primary, preventing reads from
backups for load balancing. In FaRM, we rely on sharding for
load-balancing: each physical machine is the primary for some shards
and the backup for other shards and thus reads to different shards can
be balanced across different machines.

The coordinator then validates objects that were read but not
written, using RDMA reads to re-read object versions from the
primaries.  Validation succeeds only if all such objects are unlocked
and at the version read by the transaction.  The coordinator then
sends {\sc commit-backup} messages to the backups of the write-set objects
using RDMA writes. When all {\sc commit-backup} messages have been
acknowledged by the remote NICs, the coordinator sends
{\sc commit-primary} messages to the primaries of the write-set objects
using RDMA writes.  Finally, {\sc truncate} messages are sent to write-set
primaries and backups to clean up per-transaction state maintained on
those servers. These are almost always piggybacked on other messages.
On receiving {\sc commit-primary}, primaries install the new versions of
write-set objects and unlock them. Backups install the new values
when they receive {\sc truncate}.

This protocol retains all the one-sided RDMA optimizations in
\farmorig{}'s commit protocol with no added communication.  As in
\farmorig{}, the protocol can also exploit locality, i.e. co-locating
the coordinator with the primaries. If all primaries are co-located
with the coordinator then only {\sc commit-backup} and piggybacked
{\sc truncate} messages are sent and there is no remote CPU
involvement at all on the critical path.

In addition, \farmopaque{} skips validation for all read-only
transactions, which was not possible in \farmorig{}. Committing a read-only
transaction in \farmopaque{} is a no-op. This can reduce the number of
RDMA reads issued by read-only transactions by up to 2x compared to
\farmorig{}.

By default \farmopaque{} transactions are strictly serializable.
\farmopaque{} also supports snapshot isolation (SI) and non-strict
transactions.  It does not support weaker isolation levels than
snapshot isolation. Non-strictness and SI can be set per transaction:
developers need only use this for transactions where it will improve
performance significantly without affecting application correctness.
SI and non-strictness are implemented with small changes to the
protocol shown in Figure~\ref{fig:commit_protocol}.

SI transactions in \farmopaque{} skip the validation phase. In
\farmorig{}, validation was required to check whether objects that
were read but not written, were read from a consistent snapshot.  In
\farmopaque{}, consistent snapshots are already provided during
execution. Validation is require for serializability (to check that
the read snapshot is still current at the write timestamp) but not for
snapshot isolation. Additionally, SI transactions do not need to
perform the write uncertainty wait with locks held. Instead the wait
is done concurrently with the {\sc commit-backup} and {\sc
  commit-primary} messages. This reduces latency and also contention
(by reducing the time for which locks are held).

Strictness can be relaxed in \farmopaque{} both for serializable and
SI transactions. For SI transactions, strictness means that if
transaction $A$ starts after transaction $B$, then the read timestamp
of $A$ is greater than or equal to the write timestamp of $B$.
Non-strict transactions choose the lower bound $L$ on the time
interval $\left[L,U\right]$ as the read timestamp when {\sc start\_tx}
is called, without any uncertainty wait.  Non-strict SI transactions
compute their write timestamp as their upper bound $U$ of the time
interval at the point of write timestamp acquisition, again without
any uncertainty wait. The uncertainty wait for the write timestamp is
required for serializable read-write transactions, whether strict or
non-strict.

\subsection{Fast-forward for failover}

\farmopaque{} maintains global monotonicity across clock master
failures.  When the clock master fails or is removed, a new clock
master is chosen.  As we do not rely on externally synchronized time
hardware such as atomic clocks or GPS, the new clock master must
continue based on a time interval obtained by synchronizing with a
previous clock master. Adding the uncertainty in this interval to all
time intervals generated in the future would maintain monotonicity,
but uncertainty could grow without bound.

\farmopaque{} shrinks the uncertainty on a new clock master during
clock master failover. It uses a fast-forward protocol that we
integrated with FaRM's reconfiguration protocol~\cite{farm_sosp} which
is used to add or remove servers from the cluster.

\begin{figure}
    \centering
    \figtrimvals
    \includegraphics[trim=0 7cm 0 0,clip,width=\textwidth]{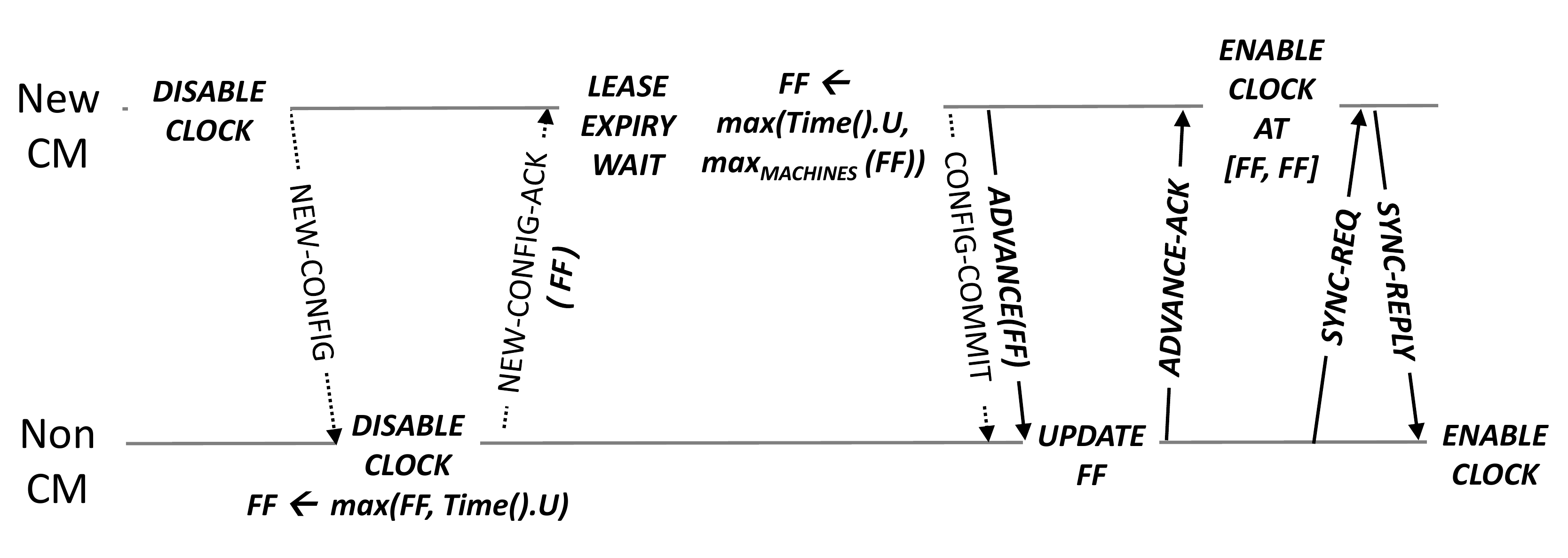}
    \begin{captiontext}
    $Time()$ returns the current time interval $\left[L, U\right]$ on the
    coordinator.
    \end{captiontext}
    \caption{Clock recovery after clock master failure}
    \label{fig:reconfiguration}
\end{figure}

In FaRM, a configuration is identified by a unique sequence number and
specifies the membership of the cluster, with one member distinguished
as the ``configuration manager'' (CM) for the configuration. Leases
are used for failure detection. Each non-CM maintains a lease at the
CM and the CM maintains a lease at each non-CM. Lease renewals are
periodically initiated by each non-CM via a 3-way handshake which
renews both the non-CM's and the CM's leases.

Configurations are stored in Zookeeper and are changed by an atomic
compare and swap that 
increments the sequence number and installs the new
configuration atomically. A configuration change is initiated by the current CM
if it suspects a non-CM has failed or a new server joins the cluster,
and by a non-CM if it suspects the CM has failed. The server
initiating the configuration change, if successful, becomes the new
CM. After committing the configuration change to Zookeeper, the new CM
uses a 2-phase protocol to commit the new configuration to all the
servers in the new configuration. This mechanism handles single and
multiple node failures as well as network partitions. If there is a
set of connected nodes with the majority of the nodes from the
previous configuration, with at least one replica of each object,
and with at least one node that can update the configuration in Zookeeper, 
then a node from this partition will become the CM. Otherwise,
the system will block until this condition becomes true.

In our design, the CM also functions as the clock master.  This lets
us reuse messages in the reconfiguration protocol for clock master
failover, and lease messages for clock synchronization.  Figure~\ref{fig:reconfiguration} shows the
reconfiguration/fast-forward protocol. Dotted lines show existing
messages in the reconfiguration protocol; bold lines show messages
that were added for clock failover. For simplicity, the figure omits
the interaction with Zookeeper (which remains unchanged) and shows
only one non-CM.  To implement fast-forward, each server maintains a
local variable $FF$ marking the last time clocks were fast-forwarded.
The new CM first disables its own clock: the clock continues to
advance, but timestamps are not given out and synchronization requests
from other servers are rejected. It then sends a {\sc new-config}
message to all non-CMs.  On receiving {\sc new-config}, each non-CM
disables its clock and sets $FF$ to the maximum of its current value
and the upper bound on the current time interval. This updated value
is piggybacked on the {\sc new-config-ack} sent back to the CM.

After receiving all acks from all non-CMs, the CM waits for one lease
expiry period which ensures that servers removed from the
configuration have stopped giving out timestamps. The CM then advances
$FF$ to the maximum of the upper bound of its current time
interval and the maximum $FF$ on all servers in the new configuration
including itself. It then commits the new configuration by sending
{\sc config-commit}, sends $FF$ to all non-CMs ({\sc advance}), and
waits for acknowledgements ({\sc advance-ack}). After receiving acks
from all non-CMs, the CM enables its clock with the time interval set
to $\left[FF, FF\right]$.  The additional round trip to propagate $FF$
ensures that time moves forward even if the new CM fails immediately
after enabling its own clock.  After receiving {\sc config-commit},
non-CMs send periodic synchronization requests to the new CM. On the
first successful synchronization, a non-CM clears all previous
synchronization state, updates the synchronization state and enables
its clock.

This protocol disables clocks for up to three round trips plus one
lease expiry period (e.g., 10\unit{ms}) in the worst case.  While clocks are disabled,
application threads requesting timestamps are blocked. If the old CM
was not removed from the configuration, then it remains the CM and we
do not disable clocks or fast-forward. If only the old CM was removed
(no other servers fail), the ``lease expiry wait'' is skipped as we
know that the old CM's lease has already expired. Adding new servers
to the configuration does not disable clocks. The worst-case clock
disable time is incurred when the CM and at least one non-CM fail at
the same time.

Fast-forward can cause the \farmopaque{} clock to diverge from
external time. If synchronization with external time is important,
this can be done by ``smearing'', i.e., gradually adjusting clock rates
without at any point violating the drift bound assumptions. Currently
\farmopaque{} does not do smearing as there is no need for
timestamps to match external time.

\subsection{Multi-versioning}

\begin{figure}
    \centering
    \figtrimvals
    \includegraphics[trim=0 \figbottrim{} 4cm 0,clip,width=\textwidth]{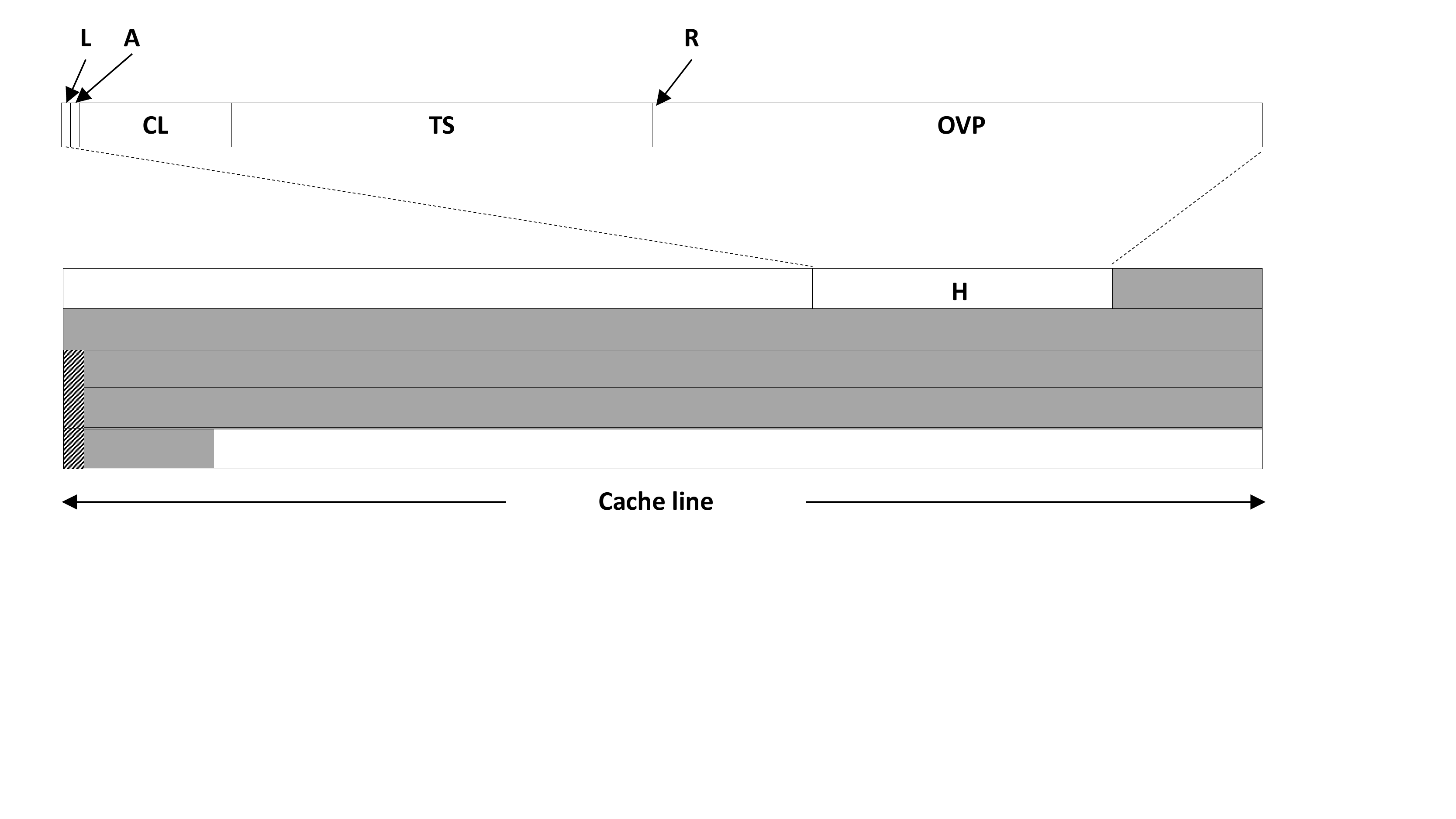}
    \begin{captiontext}
    The figure shows a \farmopaque{} object laid out across multiple cache
    lines with header ($H$), application data (in gray), and cache line versions
    (hashed). The top part of the figure shows an expanded view of the header.
    \end{captiontext}
    \caption{\farmopaque{} object layout}
    \label{fig:object_layout}
\end{figure}

\farmopaque{} supports multi-version concurrency
control~\cite{Bernstein:1986:CCR:17299}. This can help reduce
aborts in read-only transactions caused by conflicting writes.
Multi-versioning is implemented using a per-object, in-memory linked
list of old versions in decreasing timestamp order. This is optimized
for the case where most objects have no old versions, and those that
do have only a few. These are reasonable assumptions for our current
production workloads which have interactive read-only queries and
transactional updates but not batch analytics that take hours or days.
For example, we have built a distributed graph database,
A1~\cite{a1_hpts} on \farmopaque{} which must serve graph queries
within 50\unit{ms} as part of Microsoft's Bing search engine.

Reading an object always starts at the {\em head version} whose
location does not change during the lifetime of the object. This
ensures that we can always read the head version using a single
one-sided RDMA without indirection.

Each head version in \farmopaque{} has a 128-bit header $H$ which
contains a lock bit $L$, a 53-bit timestamp {\em TS}, and (in
multi-version mode) an old-version pointer {\em OVP}
(Figure~\ref{fig:object_layout}). {\em TS} contains the write
timestamp of the last transaction to successfully update the
object. $L$ is used during the commit protocol to lock objects. {\em
  OVP} points to a singly linked list of older versions of the object,
ordered by decreasing timestamp. The fields $L$, $A$, $CL$, and $R$
are used as in \farmorig{}~\cite{farm_nsdi,farm_sosp}.  For example,
$A$ is a bit that indicates whether the object is currently allocated, and $CL$ is an 8-bit counter that is incremented when a new version of the object is installed and is used to ensure RDMA reads observe writes performed by local cores atomically. The value of $CL$ is repeated at the start of each cache line after the first (as shown in Figure~\ref{fig:object_layout}).

If the head version timestamp is beyond the transaction's read
timestamp, the linked list is traversed, using RDMA reads if the
version is remote, until a version with a timestamp less than or equal
to the read timestamp is found.  Old versions also have a 128-bit
object header but only the {\em TS} and {\em OVP} fields are used. Old
versions are allocated from globally-addressable, unreplicated,
RDMA-readable regions with primaries co-located with the primaries of
their head versions.

Old version memory is allocated in 1\unit{MB} blocks carved out of
2\unit{GB} regions. Blocks are owned and accessed by a single thread
which avoids synchronization overheads. Each thread has a currently
active block to which allocation request are sent until the block runs
out of space.  Within the block we use a bump allocator that allocates
the bytes in the block sequentially.  Allocating an old version thus
requires one comparison and one addition, both thread-local, in the
common case.

An old version is allocated when a thread on a primary receives a
{\sc lock} message. The thread allocates space for the old version, locks
the head version, copies the contents as well as the timestamp and
old-version pointer of the head version to the old version, and then
acknowledges to the coordinator.  When {\sc commit-primary} is received,
a pointer to the old version is installed at the head version before
unlocking. As allocation is fast, the dominant cost of creating old
versions is the memory copy. This copy is required in order to keep
the head version's location fixed, which lets us support one-sided RDMA
reads.

\subsection{Garbage collection}

\farmopaque{} garbage-collects entire blocks of old-version memory
without touching the object headers or data in the block. A block is
freed by the thread that allocated it to a thread-local free block
pool. No synchronization is required unless the free block pool becomes
too large (at which point blocks are freed to a server-wide pool).

Each old version $O$ has a GC time that is equal to the write
timestamp of the transaction that allocated $O$. If the transaction
that allocated $O$ aborted, then the GC time of $O$ is zero. The GC
time of a block is the maximum GC time of all old versions in the
block. It is kept in the block header and updated when transactions commit.
A block can be freed and reused when its GC time is less than
the GC safe point {\em GC}, which must be chosen such that no transaction
will attempt to read old versions in the block after it is freed.

\begin{figure}
    \centering
    \figtrimvals
    \includegraphics[trim=0 \figbottrim{} 9cm 0,clip,width=\textwidth]{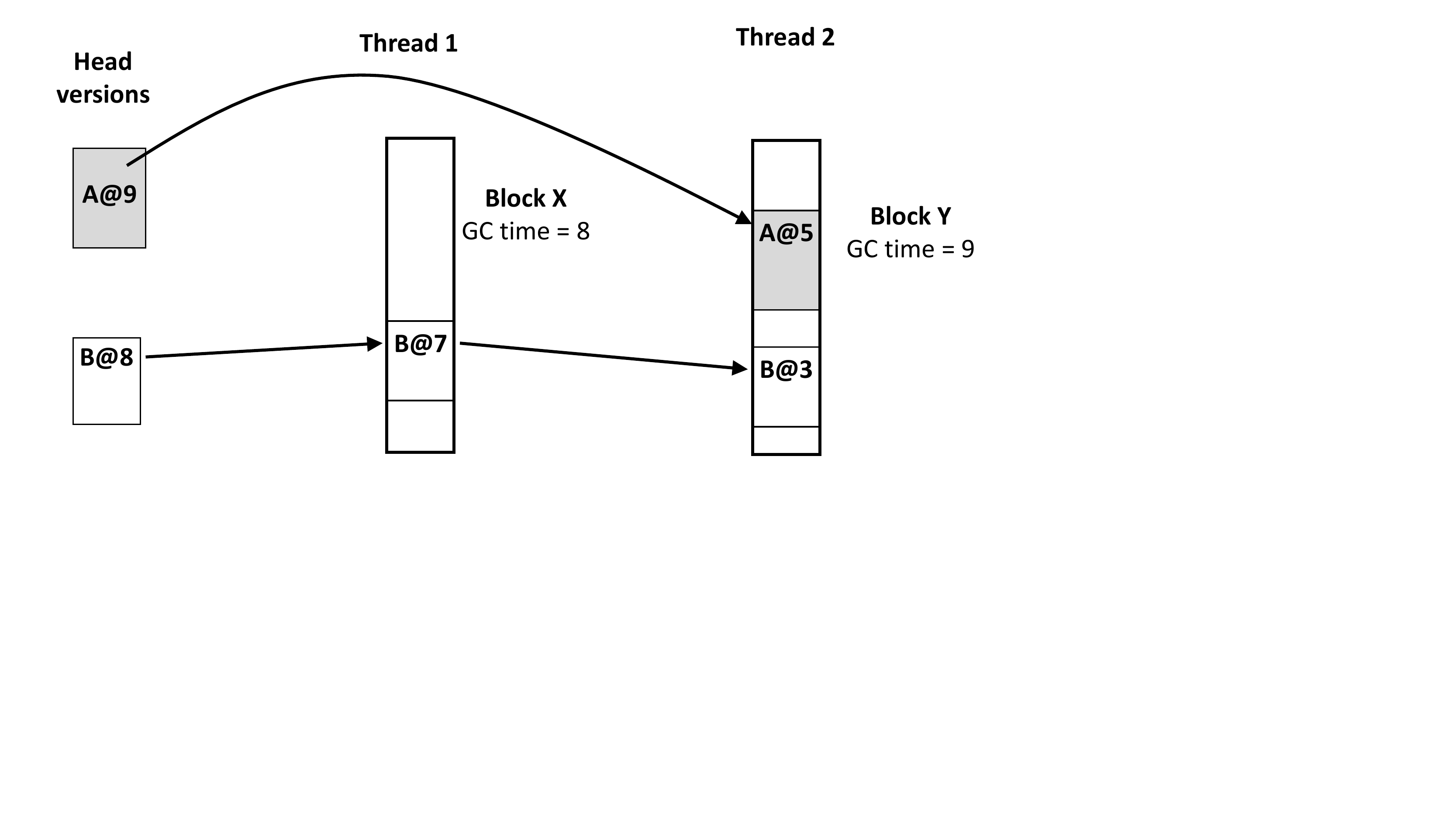}
    \caption{Old versions in \farmopaque{}}
    \label{fig:gc}
\end{figure}

Figure~\ref{fig:gc} shows a simple example with two objects $A$ and
$B$ with one and two old versions respectively, in two blocks owned by
two different threads. The list for a given object is in decreasing
timestamp order and the list pointers can cross block and thread
boundaries arbitrarily.  Old versions of the same object can be freed
out of order and the list is not guaranteed to be null-terminated. 

We compute {\em GC} to ensure that readers never follow a pointer to an old
version that has already been freed. \farmopaque{} uses the periodic lease renewal
messages to propagate information about the read timestamp of the oldest active transaction
({\em OAT}) in the system. The value of {\em OAT} is used to compute  {\em GC} .
Each thread maintains a local list of currently
executing transactions with that thread as coordinator, and tracks
the oldest read timestamp of these: this is the per-thread {\em OAT}. When
sending a lease request, each non-CM includes the minimum of this
value across all threads, and of the lower bound of the current time
interval: this is the per-machine {\em OAT}.  The CM tracks the
per-machine {\em OAT} of all machines in the configuration including
itself. The global {\em OAT} is the minimum of all of these values and is
included in all lease responses to non-CMs.

Each non-CM thus has a slightly stale view of the global {\em OAT}, which
is guaranteed to be less than or equal to the global {\em OAT} and will
catch up to the current value of the global {\em OAT} on the next lease
renewal. The global {\em OAT} is guaranteed to be less than or equal to
the read timestamp of any currently running transaction in the system.
It is also guaranteed to be less than or equal to the lower bound $L$
of the time interval on any thread in the system. $L$ is
non-decreasing on every thread and transactions created in the future,
both strict and non-strict, will have a read timestamp greater than
$L$.

\subsection{Parallel distributed read-only transactions}
\label{sec:dist_ro}

\farmopaque{} supports parallel distributed read-only transactions, which can
speed up large queries by parallelizing them across the cluster and
also partitioning them to exploit locality. To support these,
\farmopaque{} supports stale snapshot reads, which are read-only
transactions that can execute with a specified read timestamp $R$
which can be in the past, i.e., less than the current time lower bound
$L$.  A master transaction that acquires read timestamp $R$ can send
messages to other servers to start slave transactions at time $R$,
ensuring that all the transactions execute against the same
snapshot. As $R$ may be in the past when the message is received, this
requires stale snapshot reads.

Without stale snapshot reads, we could use {\em OAT} directly as the
GC safe point {\em GC}.  However this is not safe in the presence of
stale snapshot reads.  The coordinator of a master transaction with
read timestamp $R$ can fail after sending a message to execute a slave
transaction but before the slave transaction has been created. The
global {\em OAT} could then advance beyond $R$, causing a block to be
freed that is then read by a slave transaction, violating
opacity. \farmopaque{} uses a second round of propagation to solve
this problem. 

\begin{figure}
    \centering
    \figtrimvals
    \includegraphics[trim=0 4cm 5cm 0,clip,width=\textwidth]{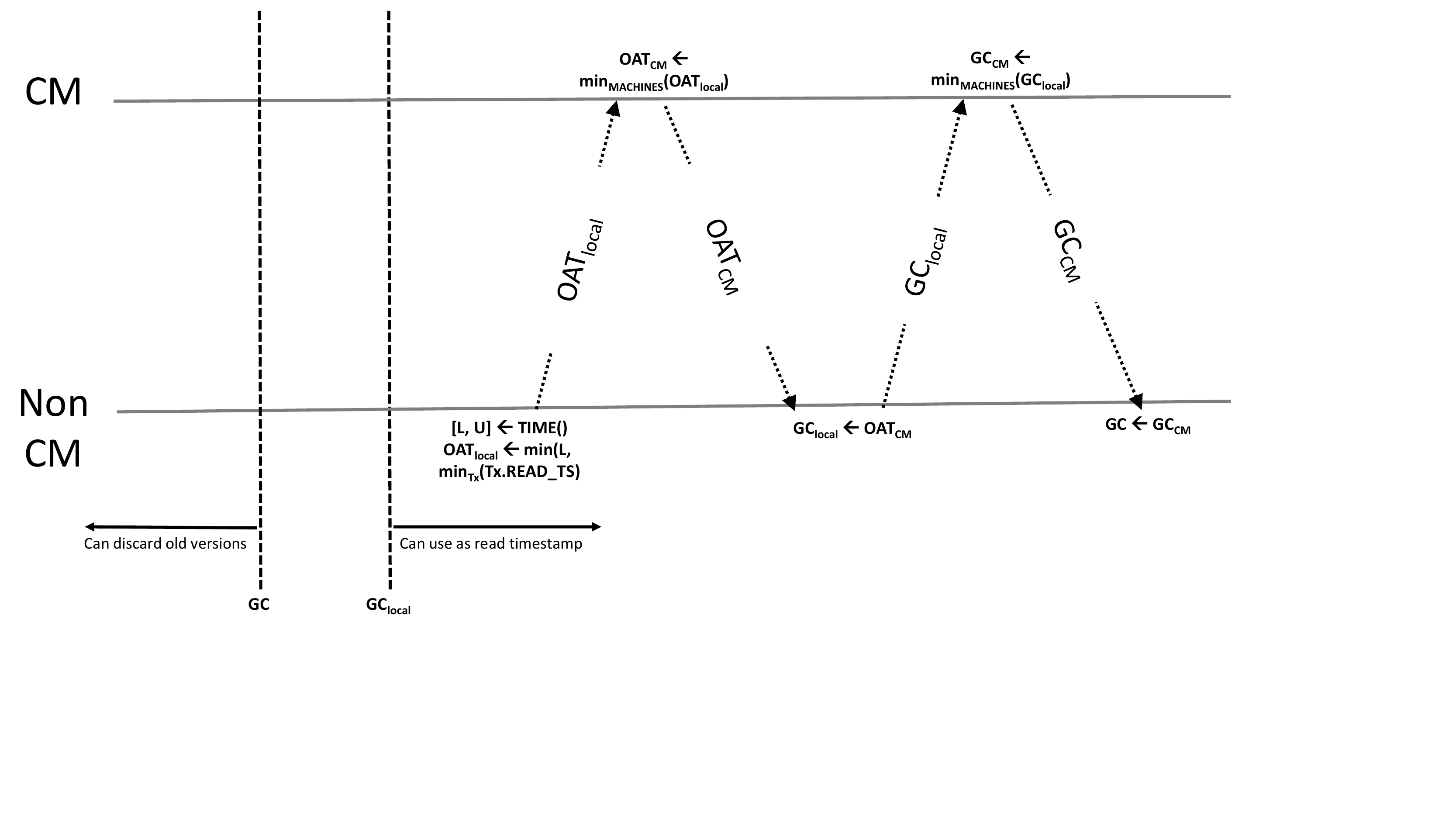}
    \caption{GC point computation in \farmopaque{}}
    \label{fig:gc_point}
\end{figure}

Figure~\ref{fig:gc_point} shows how we compute the safe GC point {\em
  GC}.  As described before, machines periodically propagate $OAT_{local}$,
which is the minimum of the current lower bound on the time and the
read timestamp of any active local transaction. This is sent to the CM
piggybacked on lease request messages. The CM computes $OAT_{CM}$ as
the mininum $OAT_{local}$ value across all servers, and piggybacks
this on lease responses. This value is used at each server to update a
$GC_{local}$ value. A second round of propagation (piggybacked on the next pair of lease request and response messages) provides each server with a 
$GC$ value.

We disallow slave transactions with read timestamps $R < GC_{local}$.
This guarantees safety as $GC_{local}$ on any server is guaranteed to
be greater than or equal to $GC$ on any server, and therefore the
slave transaction will never attempt to read freed memory.  This also
guarantees that slave transaction creation always succeeds as long as
the master transaction co-ordinator has not failed.

\subsection{Early aborts}

\farmopaque{} keeps primary but not backup copies of old versions.
With $n$-way replication, this reduces the space and CPU overhead of
multi-versioning by a factor of $n$. When a primary for a region fails and a
backup is promoted to primary, the new primary will have no old
versions for objects in that region. Readers attempting to read the
old version of an object $O$ will abort. This is a transient condition: if the
transaction is retried it will be able to read the latest version of
$O$, which exists at the new primary.  As we expect failures to be
relatively infrequent we allow such ``early aborts'' during failures
in return for reduced common-case overheads.

Unlike the opacity violations in \farmorig{} described at the
beginning of this section, these early aborts do not require
significant additional developer effort. Developers need not write
code to check invariants after every read, detect infinite loops, or
handle use-after-free scenarios: all of which were required when using
\farmorig{}. The code must already handle aborts during the commit phase due
to optimistic concurrency control, e.g., by retrying the
transaction. Early aborts can be handled in the same way.

Allowing early aborts also lets us do {\em eager
  validation}~\cite{conf/wdag/RiegelFF06} as a performance optimization.  If a
serializable transaction with a non-empty write set attempts to read
an old version, then \farmopaque{} fails the read and aborts the
transaction even if the old version is readable, as the transaction
will eventually fail read validation. We also allow applications to
hint that a serializable RW transaction is likely to generate writes;
in this case we abort the transaction when it tries to read an old
version even if the write set is currently empty.

For some workloads, old versions are accessed so infrequently that the
cost of multi-versioning outweighs the
benefit. \farmopaque{} can operate in single-version mode. In this
mode, it does not maintain old versions even at the primaries.

\subsection{Slab reuse}
\label{sec:slab-reuse}

Another use of global time is in the memory allocator for \farmopaque{} objects.
\farmopaque{} inherits \farmorig{}'s efficient slab allocator
mechanism~\cite{farm_sosp} for allocating head versions of objects.
These are allocated by the transaction co-ordinator during transaction
execution when the application issues a call to allocate an object.  Objects
are allocated from 1\unit{MB} slabs, with all objects within a slab
having the same size.  This enables a compact bitmap representation of
the free objects in a slab, and we use a hierarchical bitmap structure
to find a free object in the bitmap with only a small number of memory accesses. Each
slab is owned by a single thread on the machine holding the primary
replica of the slab. Thus the common case for allocation requires no
messages and no thread synchronization but simply accesses
thread-local state on the transaction co-ordinator.

The allocation state of an object is also maintained as a single bit
in the object header. This bit is updated at both primary and backup
replicas during transaction commit. The free object bitmap is only
maintained at the primary. On failure, when a backup is promoted to
primary of a region, it scans the object headers in live slabs in the region to build the free object
bitmap for those slabs.

Over time, as the distribution of allocated object sizes changes, we
may find a large number of empty slabs with object size $S_{old}$ and
insufficient slabs with some other object size $S_{new}$. Thus it is
important to be able to reuse empty slabs with a different object size.
As object timestamps and old version pointers are inlined into each
object header for efficiency, we must ensure that transactions do not
access the old headers after the slab has been reused; the header
locations may then contain arbitrary object data, violating opacity and
memory safety. Global time and the {\em OAT} mechanism in
\farmopaque{} enable a simple and efficient solution to this that we
describe below.

\begin{figure}
    \centering
    \figtrimvals
    \includegraphics[trim=0 12cm 0 0,clip,width=\textwidth]{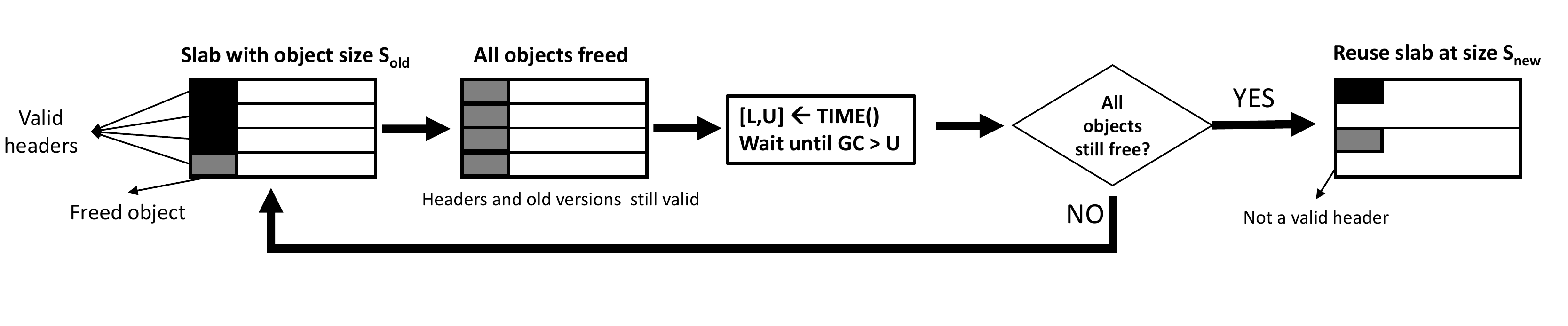}
    \caption{Slab reuse in \farmopaque{}}
    \label{fig:slab_reuse}
\end{figure}

\farmopaque{} uses the GC safe point to ensure that a slab is freed
only when no transactions can be accessing any version of an object in
the slab. When all objects in a slab are marked free, the primary
replica of the slab records the current time $\left[L, U\right]$.  It
then waits for {\em GC} to pass $U$.  If an allocation from the slab
occurs during this wait, the slab is not freed but continues to be
used with the current object size.  Otherwise, after the wait, we are
guaranteed to have seen all allocations and frees on the slab, to have
no transactions reading from the slab, and to have no allocated
objects in the slab. At this point the primary informs both backups
that the slab is free, and then adds the slab to a free list whence it
can be re-allocated with a different object
size. Figure~\ref{fig:slab_reuse} shows slab reuse in \farmopaque{}.

Threads at backup replicas apply transaction updates asynchronously and potentially out of order; this happens when the in-memory transaction logs on
the backups are truncated. We ensure that all updates to a slab are
truncated and applied at all replicas before freeing the slab. To do
this, the co-ordinator of each transaction considers the transaction
active (and therefore includes it in the computation of $OAT_{local}$)
until the transaction is truncated at all replicas and acknowledged to
the co-ordinator. Truncation requests and acknowledgements are
piggybacked on existing messages most of the time. Under low load when there are
no messages to piggyback on, we ensure that $OAT_{local}$ (and
hence {\em GC}) still advances by sending explicit truncation messages
triggered by a timer.

\subsection{Backups on disk}
\label{sec:backups-on-disk}

Another use of global time is to reduce memory overhead when backups are kept on disk.
\farmopaque{}, like \farmorig{}, can place backup replicas on disk (or
SSD) rather than in DRAM. This reduces the cost per gigabyte of the
storage (by using less DRAM) at the cost of slower update performance
and slower data recovery on failure. The read performance is
unaffected as the primary replica is still maintained in DRAM.

Data on disk backups is maintained in a log-structured format, with
committed transactions writing all updated objects to the log.  A
background process cleans the log by copying live data between log
extents. When a primary replica for a \farmopaque{} region fails, a
new primary must be rebuilt by scanning the log for that region on the
backups.  This is done in parallel across threads, machines, and
regions being recovered, but can still take hundreds of milliseconds
to seconds.

For high availability during this bulk recovery, FaRM supports {\em
  on-demand disk reads} from backups, for objects that are not yet
available at the primary. This requires an in-memory {\em redirection
  map} from each object address to the location of its most recently
committed version in the disk log.

\begin{figure}
    \centering
    \figtrimvals
    \includegraphics[trim=0 4cm 0 0,clip,width=\textwidth]{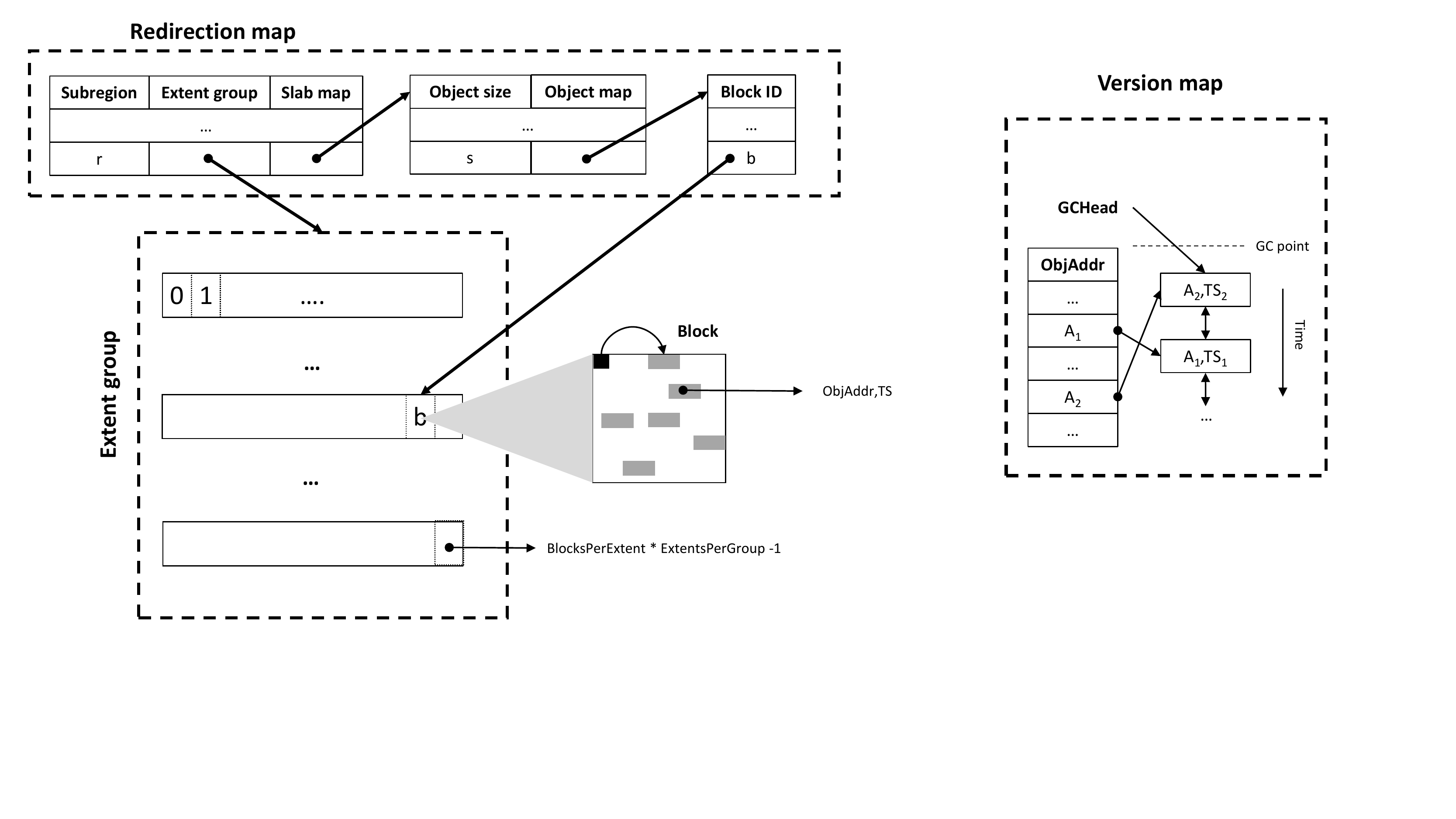}
    \caption{Redirection and version maps in \farmopaque{} with on-disk backups.}
    \label{fig:drgn_map}
\end{figure}

Figure~\ref{fig:drgn_map} shows the design of the redirection map in
\farmopaque{}. Each FaRM region is divided into subregions, 
and each subregion is further divided into fixed-size slabs. 
Each slab stores objects with a fixed size as discussed in the previous section.
The objects in a subregion are stored on a unique group of contiguous on-disk
extents, which are divided into blocks. Within each log extent, objects are written
sequentially and can span block boundaries. Each object version is written
with a header containing the object address and write timestamp. Each
block has a header that specifies the offset of the first object
header in that block. 

The redirection map is used to map an object address to the block that stores the latest version of the object.
It contains a slab map for each subregion, which is an array with an entry for each slab. 
Each entry contains the object size for the slab and an object map, which is a compact array with an entry for each object in the slab.
The entry for an object contains the identifier of the block within the extent
group for the subregion that contains the latest version of the object.

To perform an on-demand read for a given object address, FaRM finds
the block containing the start of the required object using the
redirection map and reads the block from disk. It then scans all
object headers in the block to find the matching address with the
highest timestamp version, using the block header to find the first
header and the object size for the slab (from the redirection map) to find the
position of each subsequent header.

By selecting the subregion size, extent size, extent group size, and
block size, we can trade off between disk capacity, log cleaning
overhead, redirection map size, and the cost of on-demand reads. 
For example, with 256\unit{MB} extent groups, and 4\unit{KB} blocks, each
block ID can be represented in 2 bytes, and therefore the per-object
overhead of the redirection map is 2 bytes (the additional overhead of
the per-region and per-slab tables is negligible). With 16\unit{MB}
extent groups and 64\unit{KB} blocks, the map overhead is 1 byte per
object.

When a committed transaction is applied at an on-disk backup during truncate, the
new versions of objects modified by the transaction are appended to the log and the redirection map is
updated. However, transactions are applied asynchronously and potentially out of
order at backups. Therefore, before updating the redirection map, we must
check if the version being applied has a higher timestamp than all previously logged versions for that object.

In \farmorig{}, we implemented this by storing the latest version of
each object with the block IDs in the object map. However this came at
a significant cost in memory overhead; since versions are 8 bytes in
size, the per object overhead of the redirection map was 9--10 bytes.
In \farmopaque{}, we use global time to eliminate most of this
overhead.  \farmopaque{} keeps a separate {\em version} map that maps
object addresses to their latest versions, i.e. write timestamps. The
entries in this map are also kept sorted in version order, and entries
older than the safe GC point {\em GC} are discarded, as \farmopaque{}
guarantees that we will never see an update with a timestamp older
than {\em GC}.

In practice, this eliminates almost all of the overhead of storing
versions because most objects have only an entry in the redirection
map and not in the version map. Thus an additional benefit of
implementing global time and keeping track of {\em OAT} and {\em GC}
was a 5--9x reduction in the memory overhead for on-disk backups,
which is particularly valuable with small objects (objects in FaRM can
be as small as 64 bytes).

\section{Evaluation}

In this section, we measure the throughput and latency of \farmopaque{}
and compare it to a baseline system without opacity. We then measure
the costs and benefits of multi-versioning. Finally, we demonstrate
\farmopaque{}'s scalability and high availability.

\subsection{Setup}
Our experimental testbed consists of up to $90$ machines. As we did
not always have access to all $90$ machines, all experiments used $57$
machines unless stated otherwise. Each machine has 256 GB of DRAM and
two 8-core Intel E5-2650 CPUs (with hyper-threading enabled) running
Windows Server 2016 R2. FaRM runs in one process per machine
with one OS thread per hardware thread (hyperthread). We use 15 cores
for the foreground work and 1 core for lease management and clock
synchronization. By default clocks are synchronized at an aggregate
rate of 200,000 synchronizations per second, divided evenly across all
non CMs. Each machine has one Mellanox ConnectX-3 56 Gbps Infiniband
NIC, connected to a single Mellanox SX6512 switch with full bisection
bandwidth. All graphs show average values across 5 runs with error
bars showing the minimum and maximum across the 5 runs.  Most
experiments ran for 60 seconds after a warmup period, but the
availability experiments ran for 10 minutes.

We use two benchmarks: the first is TPC-C~\cite{tpcc}, a well-known
database benchmark with transactions that access hundreds of rows. Our
implementation uses a schema with 16 indexes. Twelve of these only
require point queries and updates and are implemented as hash
tables. Four of the indexes also require range queries and are
implemented as B-Trees. We load the database with 240 warehouses per
server, scaling the database size with the cluster size. We partition
all tables by warehouse except for the small, read-only $ITEM$ table
which is replicated to all servers. We run the full TPC-C transaction
mix and report throughput as the number of neworder transactions
committed per second.

Our second benchmark is based on YCSB~\cite{ycsb}.  
We used a database of 285 million keys, with
16-byte keys and 1\unit{KB} values, stored in a single B-Tree with
leaves spread randomly across a 57-machine cluster, i.e., without any
range partitioning. The B-Tree leaves were large enough to hold
exactly one key-value pair, so a single key read or update caused one
\farmopaque{} object read or write.

We evaluate two systems. The first, \baseline{}, is an optimized
version of \farmorig{}~\cite{farm_sosp, farm_nsdi} that significantly
outperforms our previously reported TPC-C performance for \farmorig{}.
The second system is \opaque{}, which adds opacity and
multi-versioning to this optimized baseline. Both systems are run with
strict serializability and 3-way replication unless stated otherwise.
When running TPC-C we use the single-version mode of \opaque{} by
default as this gives the best performance for this workload.

\subsection{Overhead of opacity}

\begin{figure}
    \centering
    \figtrimvals
    \includegraphics[trim=0 \graphbottrim{} 0 0,clip,width=\textwidth]{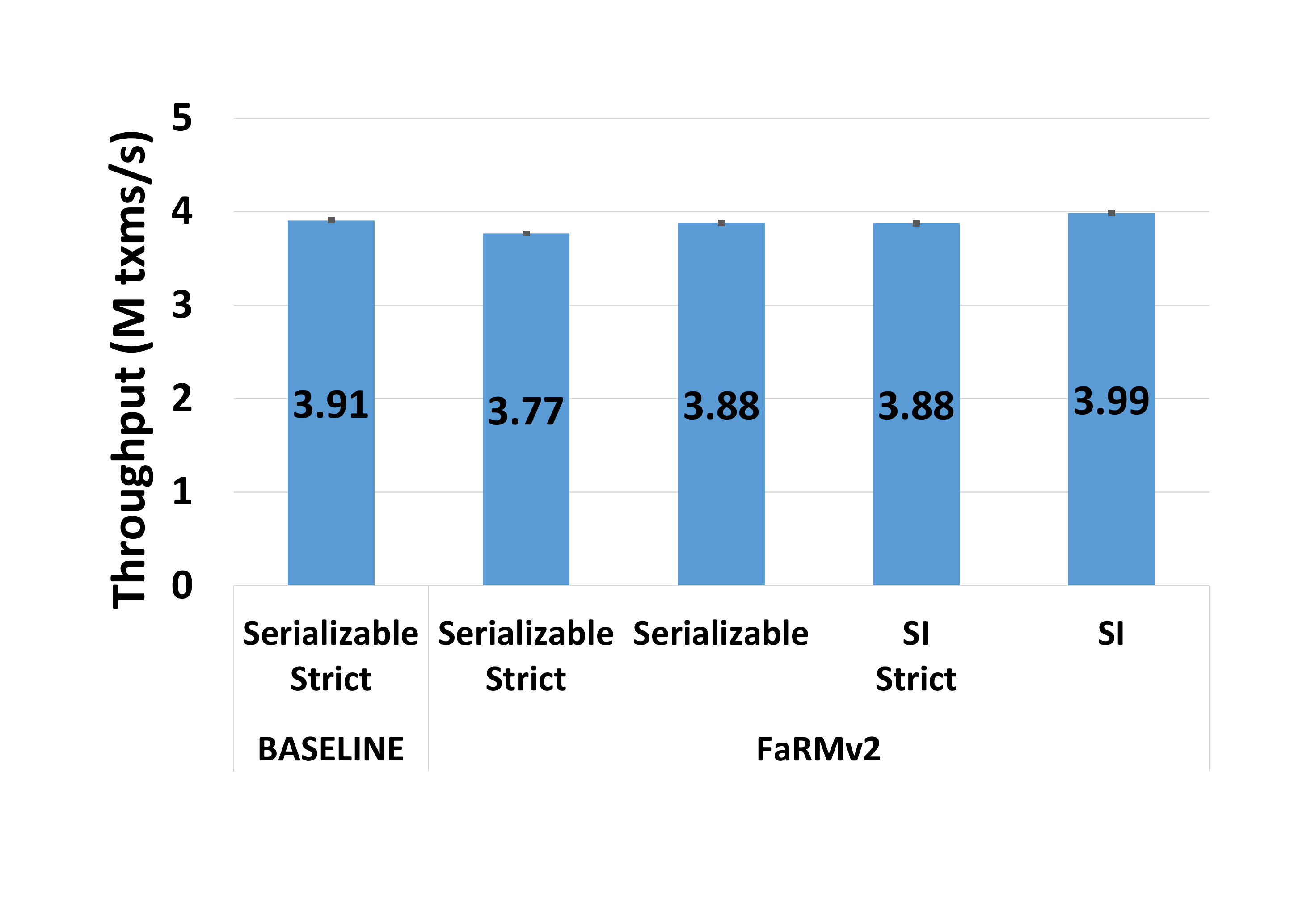}
    \caption{TPC-C throughput}
    \label{fig:tpcc-opacity}
\end{figure}

Figure~\ref{fig:tpcc-opacity} shows the saturation throughput of
\baseline{}, and of \opaque{} in single-version mode with
different combinations of serializability/SI and
strictness/non-strictness.

\opaque{} has a throughput of 3.77 million neworders per second on 57
machines with strict serializability, 3.6\% lower than \baseline{}.
The overhead of opacity comes from uncertainty waits, clock
synchronization RPCs, timestamp generation (querying the cycle counter
and thread-safe access to shared synchronization state) and $OAT$
propagation across threads and machines (which is enabled in both
single-version and multi-version mode). Abort rates are extremely low
(0.002\%) for both \baseline{} and \opaque{}.  Relaxing strictness
improves performance by 3\% with serializability, by removing the
overhead of the uncertainty wait on the read timestamp.  Using SI
rather than serializability improves performance by a further 2.7\% by
removing the uncertainty wait on the write timestamp, and also the
overhead of validation.

\begin{figure}
    \centering
    \figtrimvals
    \includegraphics[trim=0 \graphbottrim{} 0 0,clip,width=\textwidth]{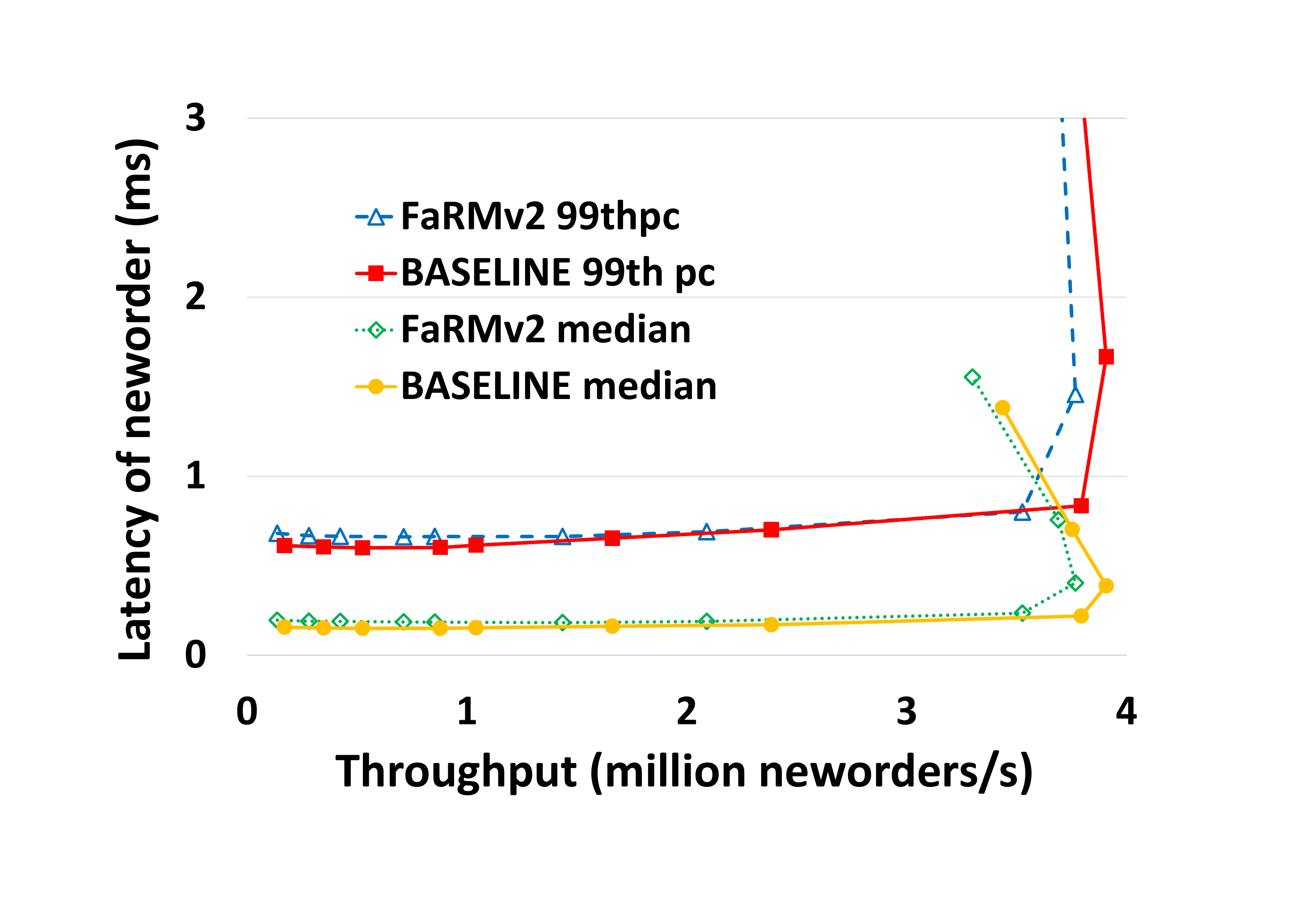}
    \caption{Throughput/latency}
    \label{fig:tpcc-latency}
\end{figure}

Figure~\ref{fig:tpcc-latency} shows a throughput-latency curve
generated by varying the level of concurrency in the workload. Both
\opaque{} and \baseline{} are able to sustain close to peak throughput
with low latency, e.g., \opaque{} has a 99th percentile neworder
latency of 835\unit{$\mu$s} at 94\% of peak throughput. At high load,
latency is dominated by queueing effects in both systems.  At low
load, the cost of opacity is an additional 69\unit{$\mu$s} (11\%)
of latency at the 99th percentile and 39\unit{$\mu$s} (25\%) at
the median.
The throughput cost of opacity is small. The added latency can be
significant for short transactions but we believe it is a price worth
paying for opacity.

\begin{figure}
    \centering
    \figtrimvals
    \includegraphics[trim=0 \graphbottrim{} 0 0,clip,width=\textwidth]{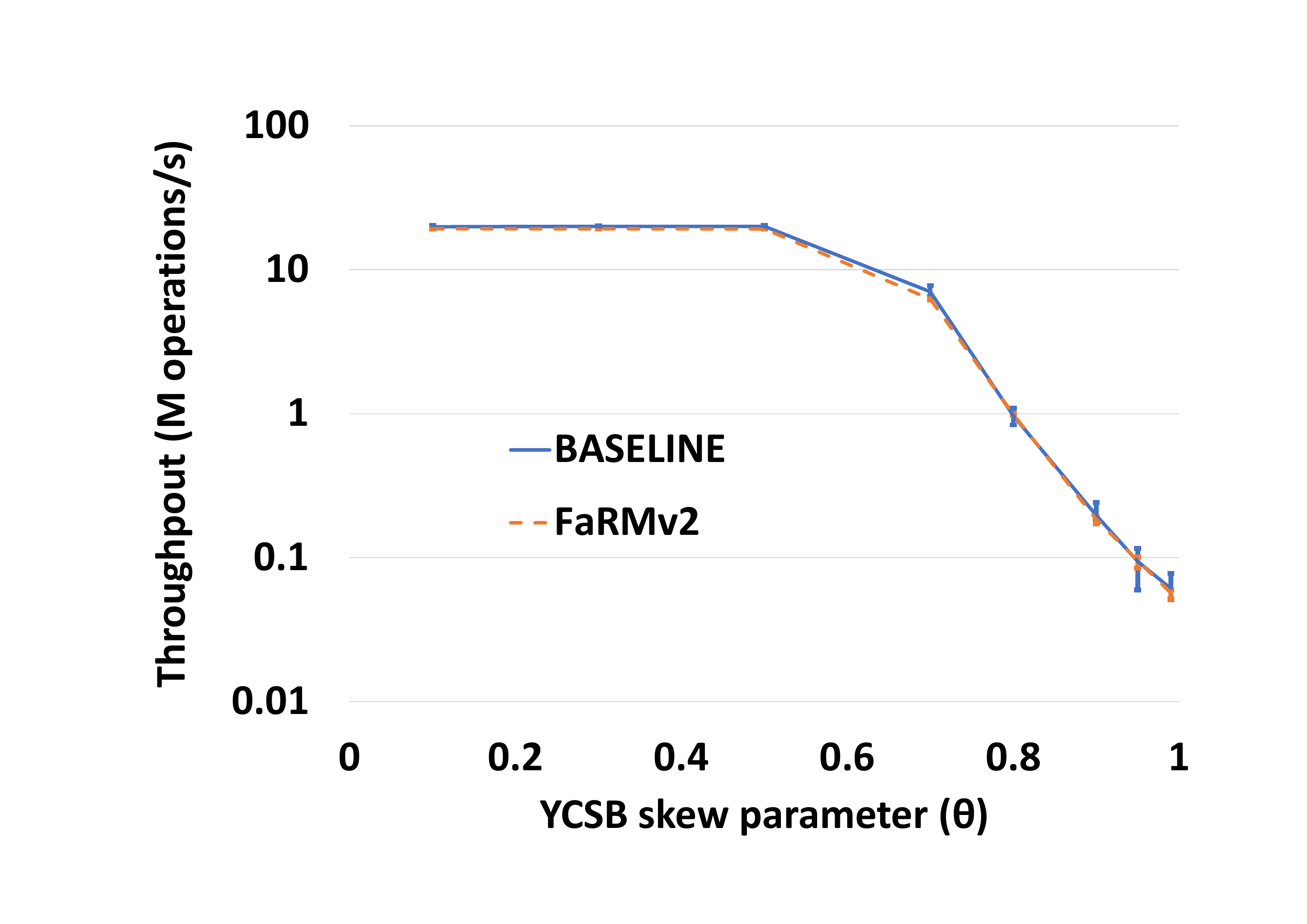}
    \caption{YCSB throughput with skew}
    \label{fig:ycsb-skew}
\end{figure}

We also measured the effect of skew on performance, using the YCSB
benchmark with a 50-50 ratio of key lookups and updates
(Figure~\ref{fig:ycsb-skew}).  Keys are selected from a Zipf
distribution, with a higher skew parameter $\theta$ giving a more
skewed distribution. At low skew, \opaque{} has around 3\% lower
performance than \baseline{}. At high skew, both systems have two
orders of magnitude lower throughput because of aborts due to high conflict rates
and performance differences are within experimental error. 
The cost of opacity does not increase with conflict rates.

\subsection{Multi-versioning}

Multi-versioning can improve performance by avoiding aborts of
read-only transactions, at the cost of allocating and copying old
versions. For TPC-C, this cost is a 3.2\% reduction in peak throughput
(1.5\% for allocation/GC and 1.7\% for copying). There is no
performance benefit for TPC-C from using MVCC as the abort rate is
extremely low even in ``single-version'' mode.

Multi-versioning has benefits when single-version operation would
frequently abort read-only transactions due to conflicts with
updates. \opaque{} bounds the memory usage of old versions to keep
sufficient space for head versions and their replicas. When this limit
is reached, writers will not be able to allocate old versions during
the $LOCK$ phase. We implemented three strategies for handling this
case, which does not occur in the TPC-C experiments shown so far. We
can {\em block} writers at the lock phase until old version memory
becomes available; we can {\em abort} them when old version memory is
not immediately available; or we can allow them to continue by allowing
old version allocation to fail, and {\em truncating} the entire history of 
objects for which old versions could not be allocated. The last option
improves write performance at the risk of aborting readers.

\begin{figure}
    \centering
    \figtrimvals
    \includegraphics[trim=0 \graphbottrim{} 0 0,clip,width=\textwidth]{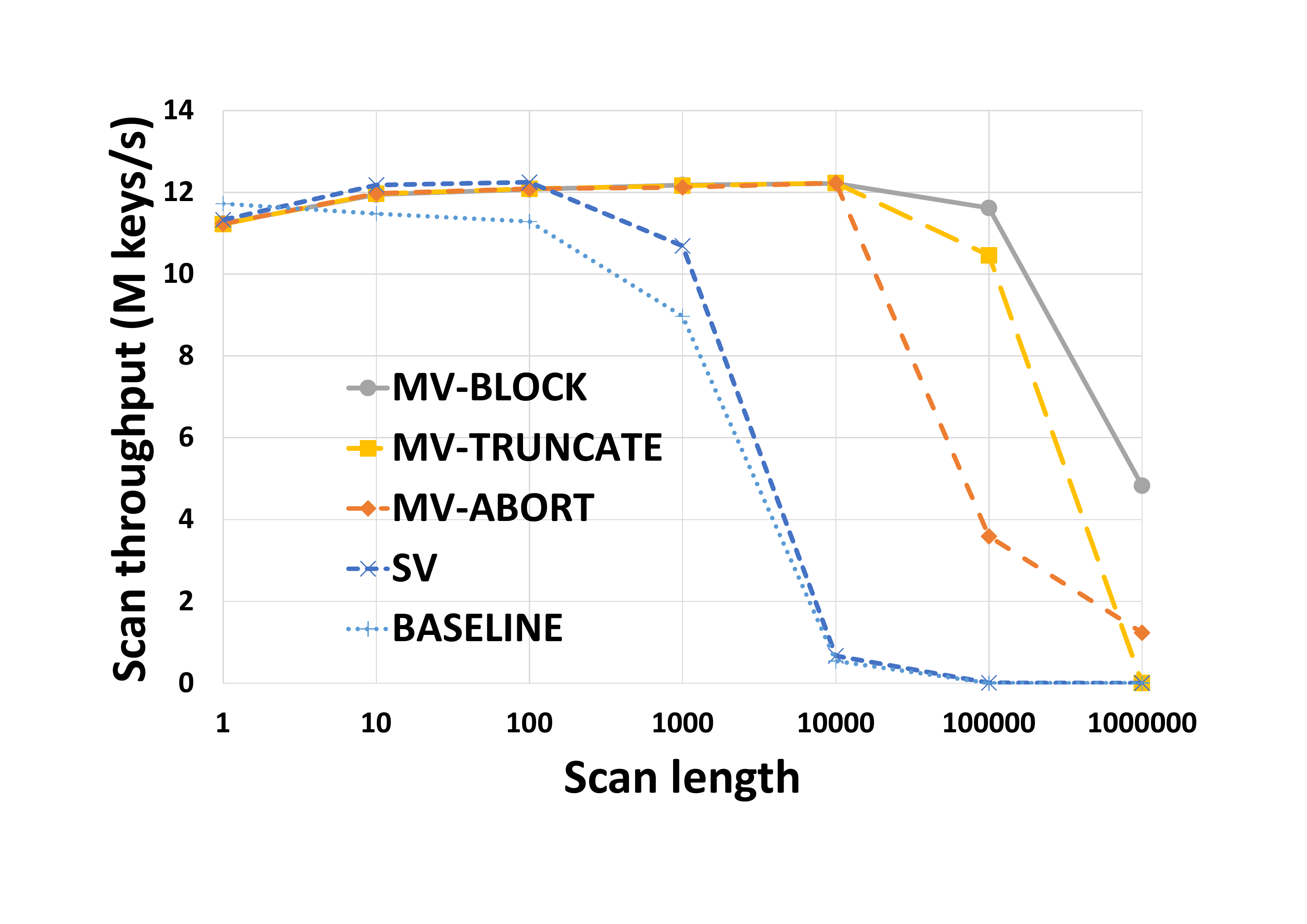}
    \caption{Throughput of mixed scan/update workload}
    \label{fig:ycsb-scans}
\end{figure}

We measured the performance of these different approaches using
YCSB. The workload contained scans (range queries) of varying length
as well as single-key updates. The start of each scan and the key for
each update is chosen from a uniform random distribution.  We
maintained a 50:50 ratio of keys successfully scanned (keys in scans
that completed without aborting) to keys successfully
updated. Old-version memory was limited to 2\unit{GB} per server.
Transactions that aborted were retried until they committed.  We
report the average throughput from 10\unit{min} of steady-state
measurement for each experimental configuration.

Figure~\ref{fig:ycsb-scans} shows the throughput in millions of keys
scanned per second on a linear scale against the scan length on a log
scale. For scans that read a single key, \baseline{} performs slightly
better than \opaque{} as it does not have the overhead of maintaining
opacity and transactions that read a single object do not perform
validation.  For scans of multiple objects, \baseline{} must validate
every object read and performs worse than \opaque{}. Beyond scans of
length 100, both single-version \opaque{} ({\sc sv}) and \baseline{}
lose throughput rapidly because of aborted scans.  The three
multi-versioning variants of \opaque{} maintain high throughput up to
scans of length 10,000: at this point, {\sc mv-abort} begins to lose
performance. {\sc mv-block} and {\sc mv-truncate} maintain throughput
up to scans of length 100,000 and then lose performance, with {\sc
  mv-block} performing slightly better than {\sc mv-truncate}. For
longer scans, all strategies perform poorly.

The average scan latency for 10,000 keys was 750--850\unit{ms} and all
the {\sc mv} variants perform well at this point.  Our
target applications have lower update rates and shorter queries. 
All the {\sc mv} variants perform well for these
relatively short queries, without any drop in throughput. In production
we use {\sc mv-truncate} by default.

\subsection{Scalability}

\begin{figure}
    \centering \figtrimvals \includegraphics[trim=0 \graphbottrim{} 0
      0,clip,width=\textwidth]{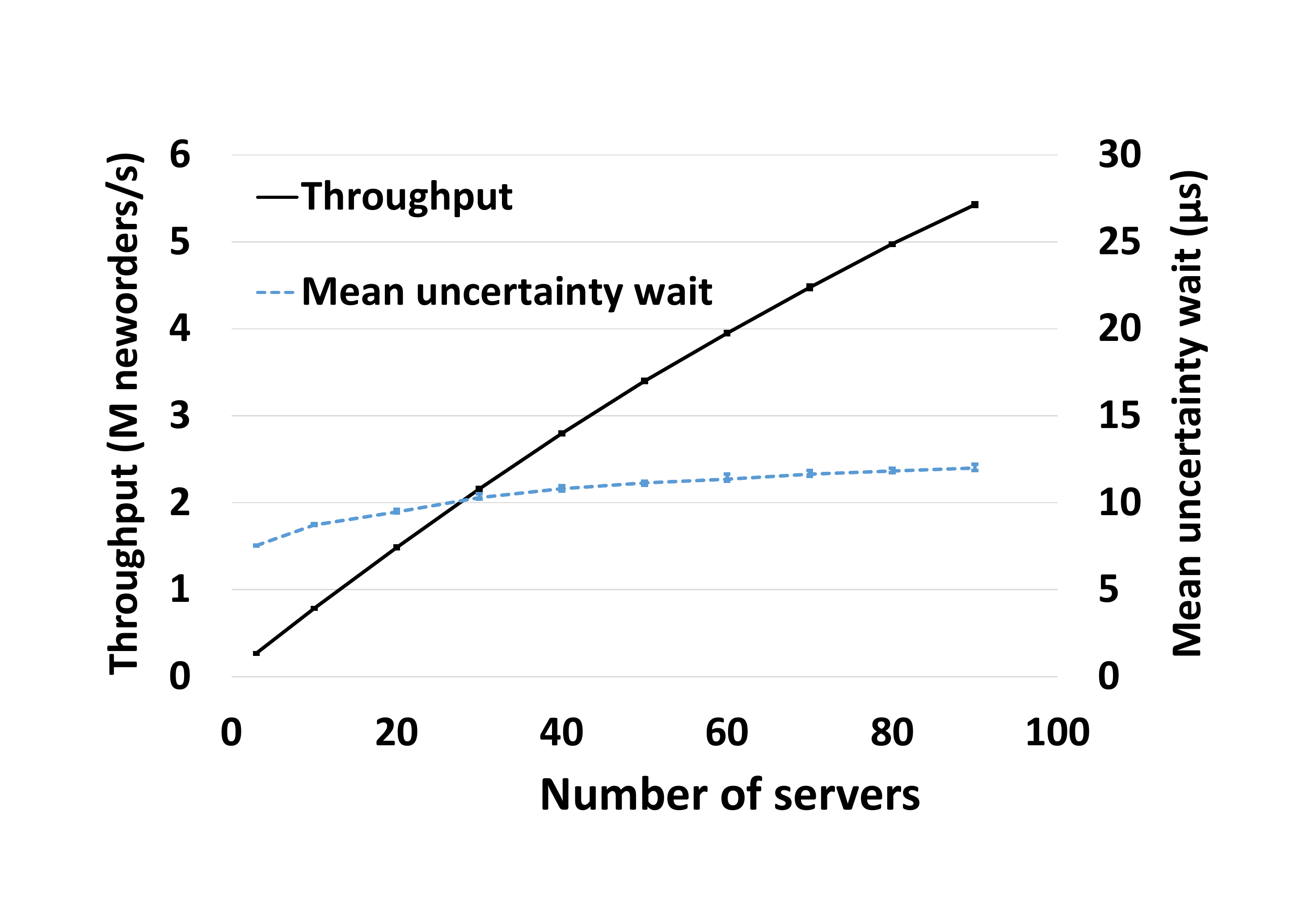}
    \caption{Scalability of \opaque{}}
    \label{fig:tpcc-scalability}
\end{figure}

\begin{figure}
    \centering
    \figtrimvals
    \includegraphics[trim=0 \graphbottrim{} 0 0,clip,width=\textwidth]{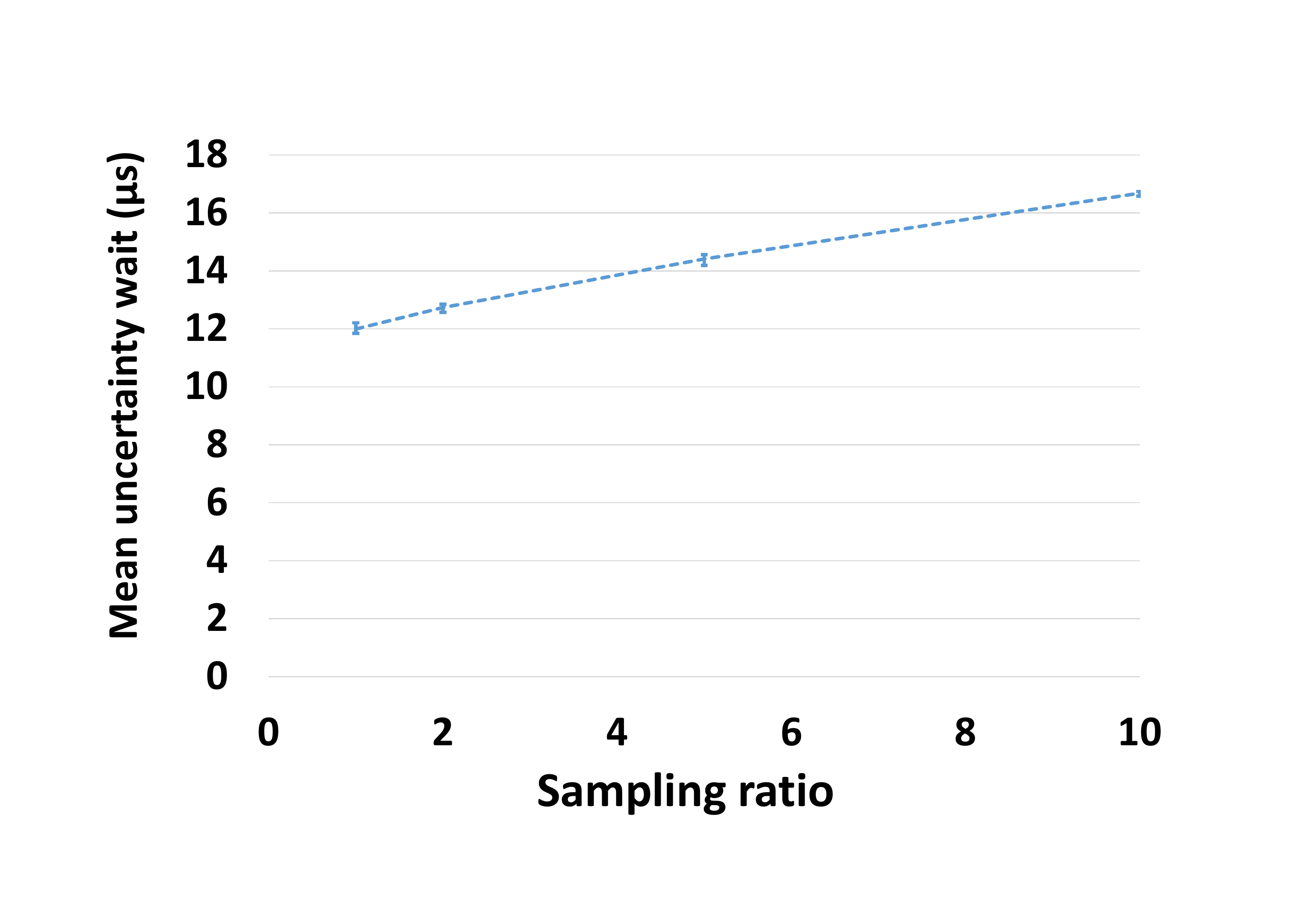}
    \caption{Scalability of clock synchronization}
    \label{fig:tpcc-clocksync}
\end{figure}

Figure~\ref{fig:tpcc-scalability} shows the throughput of \opaque{}
with strict serializability as we vary the cluster size from 3, the
minimum possible with 3-way replication, to 90. Throughput scales
well, achieving 5.4 million neworders/s at 90 machines with 21,600
warehouses. This is the highest TPC-C transactional throughput we know
of for a system with strict serializability and high
availability.\footnote{Like those of other in-memory
  systems~\cite{farm_sosp,silo_osdi14,drtm_eurosys}, our results do
  not satisfy the TPC-C scaling rules for number of warehouses.}

We use an aggregate synchronization rate of 200,000 synchronizations
per second at the CM. As the cluster size increases, this means
individual non CMs synchronize less frequently, causing the
uncertainty to increase. Figure~\ref{fig:tpcc-scalability} also shows
the average uncertainty wait as a function of cluster size, which
increases only moderately from 7.5\unit{$\mu$s} to 12\unit{$\mu$s}: a
60\% increase in uncertainty for a 30x increase in cluster size.

We emulated the effect of clusters larger than 90 machines on
uncertainty by down-sampling the synchronization responses from the
CM, on each non CM. For example, at a sampling ratio of 10 with 90 machines,
non-CMs discard 9 out of 10 synchronization responses, which emulates
the rate at which non-CMs would synchronize in a cluster of 900
machines. Figure~\ref{fig:tpcc-clocksync} shows the effect of
down-sampling on a 90-machine cluster with a sampling ratio varying
from 1 to 10. Across this factor of 10, the mean uncertainty wait
increases by 39\%, from 12\unit{$\mu$s} to 16.7\unit{$\mu$s}.

Thus our centralized time synchronization mechanism scales well
with only small increases in uncertainty when the cluster
size is increased by orders of magnitude. While other factors such
as network bandwidth and latency might limit application performance,
the synchronization mechanism scales well to moderate sized clusters
of up to 1000 machines.

\subsection{Availability}

\begin{table*}
\begin{center}
\begin{tabular}{|l|rr|rr|r|}
\hline
\multicolumn{1}{|c|}{Machines failed} &  \multicolumn{2}{|c|}{Clock disable time} & \multicolumn{2}{|c|}{Recovery time} & Re-replication time \\
\hline
1 non-CM        &  0             &                 & 49\unit{ms} & (44--56\unit{ms})  & 340\unit{s} (336--344\unit{s}) \\
CM              &  4\unit{ms}  & (3--4\unit{ms})  & 58\unit{ms} & (27--110\unit{ms}) & 271\unit{s} (221--344\unit{s})  \\
CM and 1 non-CM &  16\unit{ms} & (11--20\unit{ms}) & 71\unit{ms} & (61--85\unit{ms})  & 292\unit{s} (263--336\unit{s})\\
\hline
\end{tabular}
\end{center}
\vspace{1cm}
\caption{Recovery statistics}
\label{tab:failure}
\end{table*}

We measured availability in \opaque{} by killing one or more
\opaque{} processes in a 57-machine cluster running the TPC-C
workload and using 10\unit{ms} leases for failure detection. For these experiments we measure throughput over time in
1\unit{ms} intervals. Table~\ref{tab:failure} summarizes results for 3
failure cases: killing a non-CM, killing the CM, and killing both the
CM and a non-CM. It shows mean values across 5 runs with the min/max
range in parentheses.

The clock disable time is the elapsed time between the new CM
disabling and re-enabling the clock. The recovery time is the time
from when a failure was first suspected, to the time when the
throughput after failure reaches the mean throughput before the
failure. The re-replication time is the time taken for \opaque{}
to bring all regions affected by the failure back to full (3-way)
redundancy.

\begin{figure}
    \centering
    \figtrimvals
    \includegraphics[trim=0 \graphbottrim{} 0 0,clip,width=\textwidth]{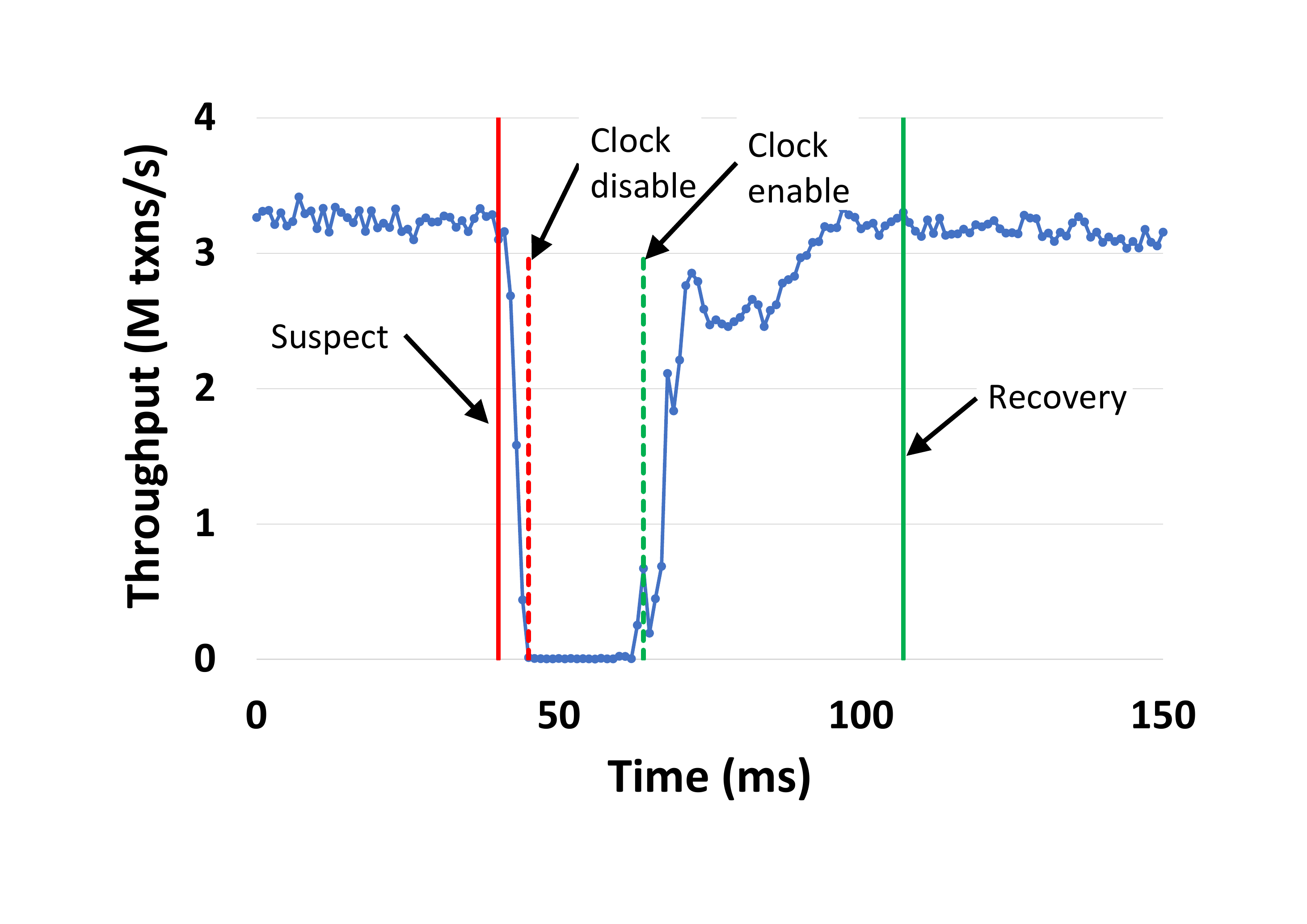}
    \caption{Failure and recovery timeline}
    \label{fig:tpcc-timeline}
\end{figure}

Figure~\ref{fig:tpcc-timeline} shows throughput over time for one of
the runs where we killed the CM as well as a non CM.  The solid
vertical lines mark the point at which \opaque{} first suspects a
failure, and the point at which we recover full throughput.  The
dashed vertical lines show the points at which the CM's clock was
disabled and then enabled. The timeline does not show the
re-replication time, which is much longer than the timescale of the
graph.

\opaque{} provides high availability with full throughput regained
within tens of milliseconds in most cases. Even before this point,
transactions can continue to execute if they only read objects whose
primaries did not change, they only write objects none of whose
replicas changed, and clocks are enabled. When clocks are disabled,
transactions block when trying to acquire read or write
timestamps. Clocks are not disabled if only non-CMs fail. If only the
CM fails they are disabled for 4\unit{ms} on average, and if both the CM and a
non-CM fail they are disabled for 16\unit{ms} on average.

Data re-replication takes around 5\unit{minutes} because it is paced
to minimize the performance impact on the foreground workload.
Re-replication happens in the background and concurrently with the
foreground workload. It does not affect availability. The pacing is a
configurable parameter: more aggressive re-replication will shorten
the window of vulnerability to additional failures but will contend
for network and CPU resources with the foreground workload.

\subsection{Operation logging}

NAM-DB~\cite{namdb} uses replicated in-memory operation logging with
no checkpointing or logging to disk. Data is not replicated; instead
each committed RW transaction writes the transaction description,
inputs, and write timestamp to in-memory logs on three machines. The
logging is load-balanced across all machines. This configuration has
high TPC-C throughput (6.5 million neworders/s on 57 machines) but it
is not realistic: once the in-memory logs fill up, the system cannot
accept any more writes. A realistic configuration would need frequent
checkpointing to truncate the logs, with substantial overhead.

Operation logging also hurts availability: on any failure, all the
logs must be collected in a single location, sorted, and the
operations re-executed sequentially, which can be orders of magnitude
slower than the original concurrent execution. 

\opaque{} provides high availability through replication. It uses
in-memory logs at backups which are continuously truncated by applying
them to in-memory object replicas.  During recovery, after clocks have
been enabled, transactions can read any region whose primary did not
change and write to a region if none if its replicas changed. Other
regions can be accessed after a short lock recovery phase during which
the untruncated portion of the in-memory logs are scanned in parallel
by all threads and write-set objects are locked.

For a fair comparison we implemented operation logging in \opaque{}
and measured the same configuration as NAM-DB: operation logging,
multi-versioning, non-strict snapshot isolation.  \opaque{} achieves
9.9 million neworders/s with 90 machines. With 57 machines, the
throughput is 6.7 million neworders/s, 4\% higher than NAM-DB's
throughput with the same number of machines, comparable CPUs, and
using older, slower CX3 NICs.

\section{A1 - Distributed Graph Database}
\label{sec:a1}

A1~\cite{A1_SIGMOD, a1_hpts} is a scalable, transactional, distributed graph database which was built on top of FaRM. 
A1 and FaRM are used in production as part of Microsoft's Bing search engine.

A1 stores large property graphs that evolve in real time, i.e., both edges and vertices have associated data (properties) and both the structure of the graph and the properties can change in real time.
A1 applications require serving complex queries within a tight latency budget and with high availability. Since these queries may involve accessing a large number of vertices and edges with long paths of data dependent accesses, low access latency is an important requirement. High throughput is also a requirement to achieve low cost graph serving. FaRM can meet all these requirements by scaling out in-memory storage and using new algorithms to leverage fast networking and non-volatile memory for low latency, high throughput, and high availability. The developers of A1 chose to use FaRM not only because it met all their requirements but also because it provided a simple programming model abstraction of a single large machine that runs transactions sequentially and never fails. No other platform met their requirements.

 \begin{figure}[t]
      \centering
      \figtrimvals
      \includegraphics[trim=0 0 0 0,clip,width=\textwidth]{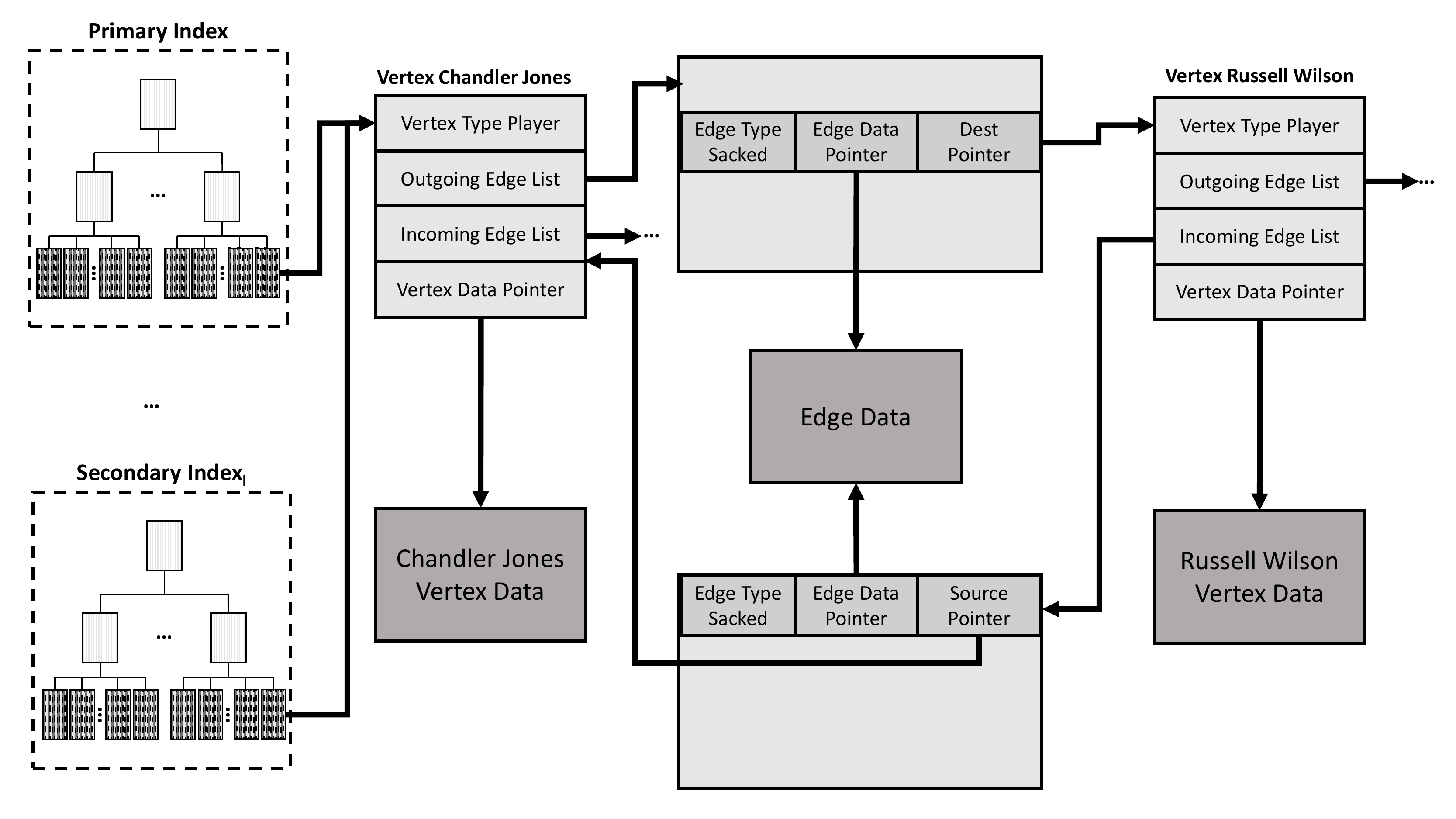}
      \begin{captiontext}
        A1 uses a collection of FaRM objects, pointers, and B-trees to represent a property graph.
        The solid lines represent FaRM pointers and the representation is optimized to traverse edges in both directions.
      \end{captiontext}
      \caption{Representation of property graphs in FaRM}
      \label{fig:A1_data_layout}
  \end{figure}
  
A1 represents the property graph using FaRM objects, pointers, and B-trees.
\Cref{fig:A1_data_layout} shows an example for two vertices representing NFL players Chandler Jones and Russell Wilson, 
and an edge indicating that Chandler Jones sacked Russell Wilson. The solid black arrows in the figure are FaRM pointers.
Vertices and edges are represented using several FaRM objects:
  \begin{itemize}
      
    \item \textbf{Vertex Header -} This object contains a field that identifies the vertex type.
    The type is represented by another object that includes a schema describing the data associated 
    with the vertex and its primary key. It also contains pointers to three other objects: a list of outgoing edges, 
    a list of incoming edges, and the data 
    (properties) associated with this vertex. As a space saving optimization, it is possible to embed some or all 
    of these objects in the vertex header when they are small.
     
    \item \textbf{Outgoing Edge List -} This object is an array with an entry for each edge from this vertex to other vertices. 
    The entry for an edge includes the edge type, a pointer to the data associated with the edge, and a pointer to the destination vertex. 
    The type is represented by another object that includes a schema describing the data associated with the edge. If the number of outgoing edges is larger than a threshold, the edge list is stored in a FaRM B-tree.
    
    \item \textbf{Incoming Edge List -} This object is an array with an entry for each edge from other vertices to this vertex. 
    It enables queries that traverse the edge in reverse. The entry for an edge includes the edge type, a pointer to the data 
    associated with the edge, and a pointer to the source vertex. 
    As with outgoing edges, if the number of incoming edges is larger than a threshold, the edge list is stored in a FaRM B-tree.

    \item \textbf{Vertex Data -} This object contains the data associated with the vertex. 
    For example, the vertex data for Russell Wilson may include the player's name, height, and date and place of birth. 
    
    \item \textbf{Edge Data -} This object contains the data associated with the edge. 
    It is pointed to by the edge headers in both the outgoing and incoming edge lists. 
    For example, the edge data in the example could include a particular date and time 
    when Chandler Jones sacked Russell Wilson. There would be a different edge for each time Chandler Jones sacked Russell Wilson.
    
  \end{itemize}

  In addition to these objects, A1 leverages FaRM B-trees to index data associated with vertices and edges.
  A1 requires each vertex type to have a primary index. Vertex types can also have one or more secondary indices. Edge types do not require a primary index.
  
A1 leverages FaRM's distributed transactions extensively to simplify development and also to expose transactions to users. Transactions are useful even for simple CRUD operations like adding an edge between two vertices. This operation involves adding edge headers to the appropriate outgoing and incoming edge lists and allocating and initializing the edge data object. These objects may be stored in different servers. A1 performs all these operations in a FaRM transaction without worrying about distribution, failures, or concurrent accesses. Development is harder in other systems that provide transactions only for operations that do not cross server boundaries. For example, Facebook's Tao~\cite{tao} requires a scrubbing mechanism to fix edges that were installed only in one direction. A1 also leverages FaRM's distributed transactions for maintaining distributed global indices consistently and for schema evolution.

Initially, A1 was developed on top of \farmorig{}. It was the experience of A1 developers that led us to add opacity in \farmopaque{}. Without opacity, A1 developers had to program defensively because they could not rely on invariants during transaction execution (because the transaction could later abort). For example, all the pointers in~\Cref{fig:A1_data_layout} were fat pointers that included both an address and an incarnation number, which had to match the incarnation number in the object. All the issues discussed in Section~\ref{sec:motiv} were brought up by A1 developers. Since the introduction of opacity, A1 developers report a better development experience with \farmopaque{}. Additionally, we were able to save space and complexity by turning fat pointers into simple FaRM addresses and removing other code that was previously required for defensive programming.

Another feature of \farmopaque{} that is important for A1 are parallel distributed read-only transactions, which were discussed in Section~\ref{sec:dist_ro}. These transactions allow A1 to parallelize large queries across many servers to reduce latency without compromising consistency or performance. Parallel distributed read-only transactions are strictly serializable and they leverage MVCC for avoiding aborts. A1 uses this feature as follows. A server that receives a complex query becomes the master. The master starts a read-only FaRM transaction and retrieves its read timestamp (rts). It then starts the query traversal and identifies a set of servers with objects of interest. It sends RPCs to those servers intructing them to continue the traversal from those objects while performing all reads at rts to ensure the query executes at a consistent snapshot. 
This process of fanning out read-only transactions and collecting their
  results may be repeated multiple times, depending on the complexity of the
  query. When the query is complete the master, can commit the
  transaction allowing old versions used in the snapshot to be garbage collected.
  
\section{Proof of Correctness}
\label{sec:proof-of-correctness}

We present a proof of opacity for a simplified variant of the \farmopaque{} commit protocol. 
For simplicity, our argument will assume abstract variants of replication, approximate timestamp maintenance, and object locking. The resulting simplified protocol is described below. 

\paragraph{Simplified Protocol.}
Recall that we assume the timestamping protocol in Algorithm~\ref{algo:ts}, where $\epsilon$ is a known bound on drift. Informally, this mechanism ``waits out" the uncertainty before returning a timestamp which is guaranteed to be 1) in the past at the moment when the function returns and 2) have occurred during the interval of the call. 
\begin{algorithm*}[ht]
\caption{Timestamp generation for strict serializability.}
 \label{algo:ts}
\begin{algorithmic}[1]
 
\Function{GET\_TS}{ } 

	\State ${[L, U]} \gets$ $\mathsf{TIME()}$
	\State $\mathsf{SLEEP}( (U - L) (1 + \epsilon) ) $
	\State \textbf{return} $U$
\EndFunction
\end{algorithmic}
\end{algorithm*}

Given this mechanism, the transactional commit protocol for strict serializability is described in Algorithm~\ref{algo:tx}. The pseudocode closely follows the description in the main paper body. Therefore we do not reiterate the textual description. For simplicity, the pseudocode merges transaction execution and commit in the function \texttt{ExecuteAndCommit} and it assumes the read and write sets are passed as arguments to this function to simplify the exposition. The implementation supports data dependent execution and the read and write sets are computed dynamically during transaction execution, but this does not fundamentally affect the commit protocol.

\newcommand\CommentLine[1]{$\triangleright$ \parbox[t]{\dimexpr\linewidth-\algorithmicindent}{#1}}

\begin{algorithm*}[ht]
\caption{Transactional Algorithm Pseudocode.}
 \label{algo:tx}
\begin{algorithmic}[1]
 
\Function{ReadAtTs}{R, ts}
	\If{ R is locked }
		 \textbf{return} $\mathsf{NULL}$ 
	\Else
		 \, Return version of object R that has the highest timestamp which is $\leq$ than ts or $\mathsf{NULL}$ if none available
	\EndIf
\EndFunction
\Statex
\Function{LockAtTs}{R, ts}
	\If{ R is locked or R has timestamp $\geq$ ts }
		 \textbf{return} $\mathsf{NULL}$ 
	\Else
		 \, Lock object R 	
	\EndIf
\EndFunction
\Statex
\Function{ExecuteAndCommit}{ RSet, WSet } 
	\State rts $\gets \Call{GET\_TS}$ \label{line:getrts}
	
\CommentLine{Execute reads}
	
	\For{ R in RSet }
		\If{ ! \Call{ReadAtTs}{ R, rts } } \label{line:successful-read}
			\State \texttt{ABORT} 
			\CommentLine{Object locked or old version not available.} 
		\EndIf
	\EndFor
	\If{WSet $== \emptyset$} 
	    \State \texttt{COMMIT} \CommentLine{Read-Only Transaction.} 
	\EndIf
	
\CommentLine{Read-Write Transaction}
	
	\For {W in WSet}
	
		\If{ ! \Call{LockAtTs}{ W, rts } }
			\State \texttt{ABORT} 
		\EndIf
	\EndFor
	
\CommentLine{Waits out uncertainty while holding locks}
	\State wts $\gets \Call{GET\_TS}$ \label{line:wwait}

	 \For { R in RSet $\setminus$ WSet }
	
		\If{ R is locked or R has timestamp > rts) }
		    \State \texttt{ABORT} \CommentLine{Validation failed.} 
		\EndIf
		
	\EndFor
			
	\State Apply a new version with wts to each variable in WSet
	\State Unlock all and  \texttt{COMMIT} 
	
\EndFunction
\Statex
\end{algorithmic}

\end{algorithm*}

\subsection{Preliminary Results}

We begin the proof by listing a series of simple properties and invariants.
The first states simple properties of the timestamps returned by \texttt{GET\_TS}. 
In the following, we will make the following claims about the global time, i.e., the time at the clock master, at which certain execution events occur, denoted by  $\mathcal{R}(\cdot)$. 

\begin{lemma}
\label{lem:simple}
The following hold.
\begin{enumerate}
    \item If $\mathcal{R}(time)$ is the global time at which the invocation of \texttt{TIME} returning the interval ${[L, U]}$ occurs, and $\mathcal{R}(ret)$ is the global time at which the  \texttt{GET\_TS} function returns, it holds that 
    \begin{equation}
        L \leq \mathcal{R}(time) \leq U \leq  \mathcal{R}(ret). 
    \end{equation}
    In a nutshell, the returned timestamp $U$ is in the past at the point when the function returns, and the global time $U$ occurs during the  execution of \texttt{GET\_TS}.
    
    \item Let $wts$ be the write timestamp of a given transaction, and let $\mathcal{R}(\mathsf{lock}(W))$ be the global time at which the lock at object $W$ completes successfully at the coordinator. Then, for any object $W$, $\mathcal{R}(\mathsf{lock}(W)) \leq wts$. In other words, every committed write transaction holds all locks at the global time given by its write timestamp. 

\end{enumerate}

\end{lemma}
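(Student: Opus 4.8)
The plan is to prove the two parts separately: the first is the substantive one and rests on the soundness of the synchronization interval together with the drift bound, while the second follows almost immediately from the first once the control flow of Algorithm~\ref{algo:tx} is taken into account. For part (1), I would first dispatch the chain $L \leq \mathcal{R}(time) \leq U$ by appealing to the correctness invariant of the synchronization algorithm (the \textsc{LB}/\textsc{UB} computations in Figure~\ref{fig:marzullo}): by construction the interval $[L,U]$ returned by $\mathsf{TIME}$ brackets the true global time at the instant the call is evaluated, so $\mathcal{R}(time)$ lies in $[L,U]$. This leaves the single nontrivial inequality $U \leq \mathcal{R}(ret)$, which is exactly the ``wait out the uncertainty'' guarantee and is where I expect the real work to be.

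To establish $U \le \mathcal{R}(ret)$ I would track the local clock against global time across the \textsc{sleep}. The sleep lasts $(U-L)(1+\epsilon)$ measured on the local clock, and the drift bound $\epsilon$ guarantees the local clock advances at most a factor $(1+\epsilon)$ faster than real (global) time. Hence the amount of global time elapsed during the sleep is at least $(U-L)(1+\epsilon)/(1+\epsilon) = U - L$, i.e.
\begin{equation}
\mathcal{R}(ret) - \mathcal{R}(time) \geq U - L.
\end{equation}
Combining this with $\mathcal{R}(time) \geq L$ from the first chain yields $\mathcal{R}(ret) \geq L + (U-L) = U$, as required.

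For part (2), I would argue purely from the order of operations in \texttt{ExecuteAndCommit}. Every \textsc{LockAtTs} call in the write-set loop completes before the invocation of \texttt{GET\_TS} on line~\ref{line:wwait} that produces $wts$; in particular the global instant $\mathcal{R}(\mathsf{lock}(W))$ at which any lock completes precedes the internal $\mathsf{TIME}$ call of that \texttt{GET\_TS} invocation, whose global instant is $\mathcal{R}(time)$. Applying part (1) to this specific \texttt{GET\_TS} call gives $wts = U \geq \mathcal{R}(time) \geq \mathcal{R}(\mathsf{lock}(W))$, which is the claim.

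The main obstacle is making the drift argument of part (1) fully rigorous: I would need to fix the formal model relating elapsed local-clock time to elapsed global time under the relative-rate bound $\epsilon$, and verify that the factor $(1+\epsilon)$ baked into the sleep is precisely what forces at least $U-L$ units of global time to pass in the worst case (e.g.\ a local clock running at the maximum permitted rate $1+\epsilon$). By contrast, the interval-soundness chain $L \leq \mathcal{R}(time) \leq U$ is inherited as an invariant of the synchronization protocol rather than reproven here, and part (2) is then a short ordering argument that invokes part (1) as a black box.
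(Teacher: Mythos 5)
Your proposal is correct and is exactly the argument the paper intends: the paper simply omits this proof, stating that it ``follows immediately from the properties of the timestamp mechanism,'' and your elaboration --- interval soundness of $\mathsf{TIME}$ for $L \leq \mathcal{R}(time) \leq U$, the drift-bounded sleep of length $(U-L)(1+\epsilon)$ forcing at least $U-L$ units of global time to elapse for $U \leq \mathcal{R}(ret)$, and program order plus part (1) for part (2) --- is the standard way to fill in those details. The only cosmetic remark is that the ``holds all locks \emph{at} $wts$'' gloss of part (2) also uses the fact that unlocking happens only after \texttt{GET\_TS} returns, which your $U \leq \mathcal{R}(ret)$ already covers.
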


The proof of this lemma follows immediately from the properties of the timestamp mechanism, and is therefore omitted. 
The proof of opacity in the next section will center around two invariants. 
The first concerns values read by each transaction. Intuitively, it states that for each object, the value read by a transaction is the latest version ever written with a write timestamp less than or equal to its read timestamp.

\begin{lemma}[Read Invariant]
\label{lemma:read-invariant}
	Let  $rts$ be a transaction's read timestamp, and let $R$ be an object in the transaction's read set, which is read at version $v_R \leq rts$. 
	Then $R$ may never be updated at a version in the interval $(v_R, rts)$ in this execution. 
\end{lemma}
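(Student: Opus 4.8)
The plan is to argue by contradiction: suppose some transaction $T'$ commits a version of $R$ with write timestamp $wts' \in (v_R, rts)$, and derive a contradiction with the fact that the reading transaction $T$ successfully returned $v_R$ as the highest version with timestamp $\le rts$ on line~\ref{line:successful-read}. The whole argument rests on converting the global-time guarantees of Lemma~\ref{lem:simple} into a temporal ordering between $T'$'s lock on $R$ and $T$'s read of $R$, and then invoking the protocol's lock/install/unlock discipline.

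First I would pin down two global-time facts. Applying Lemma~\ref{lem:simple}(1) to $T$'s read-timestamp acquisition on line~\ref{line:getrts}, we have $rts = U \le \mathcal{R}(ret)$; since the read of $R$ is issued only after \texttt{GET\_TS} returns, the global time of that read satisfies $\mathcal{R}(\mathsf{read}(R)) \ge \mathcal{R}(ret) \ge rts$. Applying Lemma~\ref{lem:simple}(2) to $T'$, its lock on $R$ completes at $\mathcal{R}(\mathsf{lock}(R)) \le wts' < rts$. Combining the two bounds, $T'$ acquires its lock on $R$ \emph{strictly before} $T$ reads $R$.

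Next I would case-split on whether $T'$ installs its new version before or after $\mathcal{R}(\mathsf{read}(R))$, using the fact that \texttt{ExecuteAndCommit} installs a new version while holding the lock and unlocks only afterward, so per object the order is lock, then install, then unlock. In the first case, a committed version with timestamp $wts'$ (or a still-higher one that is nonetheless $\le rts$) is present at read time, so \texttt{ReadAtTs} would return a version with timestamp $> v_R$ rather than $v_R$, a contradiction. In the second case, since the lock was taken before the read and released only after installation, $R$ is locked at the instant $T$ reads it; then \texttt{ReadAtTs} returns $\mathsf{NULL}$ and $T$ aborts on $R$, contradicting the assumption that $T$ read $v_R$ successfully. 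As both cases are impossible, no such $T'$ exists. I would also remark that $T$'s own writes commit at $wts \ge rts$ (Lemma~\ref{lem:simple}(1) applied to the second \texttt{GET\_TS} on line~\ref{line:wwait}, which is invoked after the first returns), so they never fall in the open interval $(v_R, rts)$ and the argument applies uniformly without separately worrying about $T' = T$.

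The main obstacle is precisely the second case: I must argue that the lock on $R$ is held \emph{continuously} across the instant of $T$'s read, i.e.\ that observing $R$ unlocked at read time forces the installation to have already occurred. This is the one step where the abstract timing bounds from Lemma~\ref{lem:simple} must be combined with the structural invariant of the commit protocol (install precedes unlock), and it is where a careless argument could leave a gap by allowing $T'$ to "slip in" an unobserved write between lock release and read.
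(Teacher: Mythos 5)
Your proposal is correct and follows essentially the same route as the paper's proof: both invoke Lemma~\ref{lem:simple}(1) to place the global time of $T$'s read at or after $rts$, Lemma~\ref{lem:simple}(2) to place $T'$'s lock acquisition at or before its write timestamp, and the lock/install/unlock discipline to conclude that $T$ would have either observed the lock or the newer version. Your explicit case split on whether the install precedes the read is just the contrapositive packaging of the paper's direct derivation of $w > rts$, and your side remark about $T' = T$ is a harmless (and slightly more careful) addition.
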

\begin{proof}

    Let the current transaction be $T$, and assume for contradiction that there exists a transaction $T'$ which writes a version $w \in (v_R, rts)$ at some point in this execution. 
    Given that the read succeeds, it follows that the object $R$ cannot be locked at the time when it is read. 
    Since the latest version read before $rts$ is $v_R$, this version could only have been written \emph{after} the transaction $T$ completed its read in line~\ref{line:successful-read}. Since the object $R$ is not locked at the time when it is read by $T$, it follows that the object may only be locked \emph{after}  $T$ completes its read of $R$. Hence, by the first statement in Lemma~\ref{lem:simple}, we have $rts \leq \mathcal{R}(read(T, R)) < \mathcal{R}( lock(T', R))$. Yet,  $\mathcal{R}( lock(T', R)) \leq w$, by the second part of Lemma~\ref{lem:simple}, as $w$ is the write timestamp of transaction $T'$. We have obtained that $w > rts$, a contradiction. This establishes our claim. 
\end{proof}

The serializability proof will follow from the following \emph{write invariant}, which says that for any committed read-write transaction, none of the objects in the union of its read and write sets may be modified in the interval between its read and write timestamps. 

\begin{lemma}[Write Invariant]
	\label{lem:write-invariant}
	Consider an arbitrary \emph{committed} read-write transaction $T$, 
	 let $rts$ be its read timestamp, and $wts$ be its write timestamp. 
	Then none of the variables in $T$'s data set (i.e. read set $\cup$ write set) may ever have a version $v \in ( rts, wts ]$. 
\end{lemma}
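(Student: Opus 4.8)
The plan is to prove the claim separately for the two kinds of objects in $T$'s data set: the objects $T$ writes (its write set) and those it only reads ($RSet \setminus WSet$). A fact I will establish first and use throughout is that the head (latest) version timestamp of any object is monotonically non-decreasing in global time. The only way to install a new version is to hold the lock, and \texttt{LockAtTs}$(R, ts)$ refuses the lock whenever $R$ already carries a timestamp $\geq ts$; a writer locks with its own read timestamp $rts'$, forcing the head below $rts'$, and then installs its write timestamp $wts' \geq rts'$, which therefore strictly exceeds the prior head. Hence each successive head version has a strictly larger timestamp, a version with timestamp $v$ exists on an object iff $v$ was at some instant the head, and once the head reaches a value it never drops below it.

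For a write-set object $W$, transaction $T$ succeeds at \texttt{LockAtTs}$(W, rts)$, which forces $W$'s head timestamp to be $< rts$ at the lock moment; by monotonicity no version in $(rts, wts]$ has ever been installed on $W$ before $T$ locks it. From that lock until $T$ installs version $wts$ and unlocks (which by Lemma~\ref{lem:simple}, part~2, occurs no earlier than global time $wts$, and in fact after validation at global time $> wts$), $T$ alone holds the lock, so no other transaction can install any version of $W$. Once $T$ commits the head equals $wts$, and monotonicity forces every subsequent write to exceed $wts$. Thus the unique version ever landing in $(rts, wts]$ on a write-set object is $wts$ itself, installed by $T$; no transaction \emph{other than} $T$ writes into the interval. (The statement is read modulo $T$'s own installation of its serialization timestamp $wts$, which is of course permitted.)

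For a read-only object $R \in RSet \setminus WSet$, $T$ performs no write, so it suffices to rule out any interfering writer. Suppose for contradiction some transaction $T'$ installs a version $w' \in (rts, wts]$ on $R$. By Lemma~\ref{lem:simple}, part~2, $\mathcal{R}(lock(T', R)) \leq w' \leq wts$, while $T$'s validation of $R$ runs after \texttt{GET\_TS} returns $wts$ and hence, by Lemma~\ref{lem:simple}, part~1, at global time $> wts$; so $T'$ acquires $R$'s lock strictly before $T$ validates. Now split on when $T'$ makes $w'$ the head. If this happens before $T$ reads $R$ in line~\ref{line:successful-read}, then either $R$ is still locked (the read fails) or, by monotonicity, $R$'s head is $\geq w' > rts$ at validation time (validation fails). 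If $w'$ becomes head after $T$'s read but before $T$'s validation, the head is again $\geq w' > rts$ at validation and validation fails. In the remaining case $T'$ installs $w'$ only after $T$'s validation; then $T'$ holds $R$'s lock continuously from its lock time ($\leq wts <$ validation time) until it commits $w'$ (later still), so validation observes $R$ locked and fails. In every case $T$ aborts, contradicting that it commits.

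The main obstacle is the read-only case, and within it the last sub-case, where the interfering writer has already secured a timestamp $w' \leq wts$ but defers installing it until after $T$ has validated. Excluding it requires pinning down \emph{both} ends of $T'$'s critical section in global time — that $T'$ must lock $R$ no later than $w' \leq wts$ (Lemma~\ref{lem:simple}, part~2) and that it cannot release until it installs $w'$ — and comparing them against $T$'s validation time, which Lemma~\ref{lem:simple}, part~1, places strictly after $wts$. The monotonicity-of-head-timestamps fact is what fuses the ``installs early'' cases together, so carefully deriving it from the \texttt{LockAtTs} guard is the other step that needs attention.
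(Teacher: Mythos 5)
Your proof is correct and runs on the same engine as the paper's: Lemma~\ref{lem:simple} (part~2) pins the interfering writer's lock acquisition at or before its write timestamp, part~1 places $T$'s validation after $wts$, and a case analysis on the timing of the writer's critical section relative to $T$'s read and validation forces an abort in every branch. Where you differ is in decomposition, and the differences buy rigor. First, you treat write-set objects separately via lock exclusivity and the \texttt{LockAtTs} guard, whereas the paper argues uniformly through ``$T$'s validation on $V$'' --- an appeal that is vacuous when $V$ lies only in the write set, since validation ranges over $RSet \setminus WSet$; your split closes that gap, and your parenthetical excluding $T$'s own installation of $wts$ is a reading of the statement the paper leaves implicit but needs. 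Second, you isolate monotonicity of head-version timestamps as a standalone fact derived from the \texttt{LockAtTs} guard; the paper's first case (``its latest committed version is $v > rts$'') silently relies on exactly this. Third, you use Lemma~\ref{lem:simple} to rule out ``lock acquired after validation'' up front rather than as the paper's third contradiction case --- same content, folded in earlier. The paper's version is shorter; yours is more airtight.
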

\begin{proof}
	Assume for contradiction that there exists an execution with a committed read-write transaction $T$ with read timestamp $rts$, write timestamp $wts$, and a variable $V$ in $T$'s data set which at some point in time has a version $v \in ( rts, wts ]$. 
	Then clearly $V$ must have been written by another committed read-write transaction $T_v$. 
	
	Lemma~\ref{lem:simple} (part 2), $T_v$ must hold the lock on $V$ at the global time $v  \in  ( rts, wts ]$. 
	We consider three cases, depending on the relation between $T_v$'s lock acquisition and $T$'s validation. First, if $T_v$ releases its lock on $V$ before $T$'s validation on $V$, then transaction $T$ will necessarily abort, as it attempts to validate $V$ at $rts$, while its latest committed version is $v > rts$. 
	
	In the second case, if $T_v$ still holds the lock on $V$ at the time of $T$'s validation, then this validation clearly fails, and $T$ aborts. The last remaining case is if $T_v$ acquires its lock on $V$ \emph{after} $T$'s validation on $V$. However, in this case $v > wts$, a contradiction.  
	
	We therefore conclude that no object in $T$'s data set may have versions between $rts$ and $wts$, as claimed. 
\end{proof}

\subsection{Proof of Opacity}

\noindent We now extend these invariants to show that the transactional protocol provides \emph{strict snapshot reads}, that is, the read set viewed by a transaction corresponds to an atomic snapshot taken during the transaction's execution. 

\begin{lemma}[Strict Snapshot Reads]
\label{lem:si}
	The transactional protocol ensures that each transaction works on a consistent snapshot of object values, taken at its read timestamp, which occurs during the transaction's execution. 
\end{lemma}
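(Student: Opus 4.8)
The plan is to prove two things separately: first, that the values the transaction reads all reflect a single database state indexed by the read timestamp $rts$; and second, that $rts$ corresponds to an actual instant of global time falling inside the transaction's execution window. I would define the snapshot $\Sigma_{rts}$ to be the state in which every object takes the value of its committed version with the highest write timestamp $\le rts$. The crux of the first claim is already packaged in the Read Invariant (Lemma~\ref{lemma:read-invariant}): for any object $R$ in the read set that is successfully read at version $v_R \le rts$ (line~\ref{line:successful-read}), no version in the open interval $(v_R, rts)$ is ever produced in the execution. Consequently $v_R$ is exactly the highest committed version of $R$ with timestamp $\le rts$, i.e. $v_R = \Sigma_{rts}(R)$. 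Applying this to every object in the read set shows that the entire read set agrees with the single state $\Sigma_{rts}$.

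Next I would argue that $\Sigma_{rts}$ is a genuinely consistent state rather than an arbitrary patchwork of versions. This is where the atomicity of writes enters: each committed read-write transaction installs new versions of all objects in its write set under one common write timestamp $wts$ (the step of \texttt{ExecuteAndCommit} that applies a new version with $wts$ to every object in the write set). Hence for any fixed real time $t$, a write transaction is either entirely visible (if $wts \le t$) or entirely invisible (if $wts > t$) in the $t$-indexed state. In particular $\Sigma_{rts}$ is precisely the result of atomically applying all committed read-write transactions with $wts \le rts$, so it is a coherent, all-or-nothing snapshot. Combined with the previous paragraph, the read set the transaction observes is a subset of this coherent snapshot and is therefore internally consistent.

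Finally, to pin the snapshot to the transaction's execution, I would invoke part~1 of Lemma~\ref{lem:simple} applied to the \texttt{GET\_TS} call on line~\ref{line:getrts} that produces $rts = U$. That lemma gives $L \le \mathcal{R}(time) \le U \le \mathcal{R}(ret)$, so the global time $rts$ is realized at some instant between the invocation of \texttt{TIME} and the return of \texttt{GET\_TS}, which lies within the transaction's execution. Assembling the three pieces yields the statement: the transaction reads a consistent snapshot $\Sigma_{rts}$ taken at $rts$, an instant occurring during its execution.

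The main obstacle I anticipate is not the per-object correctness, which the Read Invariant hands us almost for free, but rather justifying the leap from ``each read individually equals its latest-$\le rts$ version'' to ``the reads jointly form a consistent snapshot.'' That step rests on write atomicity and on the fact that a single scalar timestamp $rts$ indexes a well-defined global state; I would make sure to state the atomicity assumption explicitly (each committed writer uses one $wts$ across its whole write set) so that the $rts$-indexed state is unambiguous. A secondary point to handle carefully is that the claim should cover aborted transactions as well: the Read Invariant applies to each read that succeeds, so every read performed before an abort is still consistent with $\Sigma_{rts}$, which is exactly what opacity for aborted transactions requires.
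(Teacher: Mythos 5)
Your proposal is correct and follows essentially the same route as the paper: apply the Read Invariant (Lemma~\ref{lemma:read-invariant}) per object to conclude each value read is the latest version with timestamp at most $rts$, then use Lemma~\ref{lem:simple} (part 1) to place $rts$ inside the execution window. The paper's proof is terser and leaves implicit the write-atomicity point you spell out (that each committed writer installs all its versions under a single $wts$, so the $rts$-indexed state is well defined); making that explicit is a reasonable elaboration, not a different argument.
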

\begin{proof}
	Consider an arbitrary transaction $T$, reading a set of objects $R_1, R_2, \ldots, R_k$. 
	Applying  Lemma~\ref{lemma:read-invariant}, we have that, for each object $R_i$, the version read $v_i$ is the latest version written \emph{before} the read timestamp $rts$ of $T$. We can therefore serialize $T$'s snapshot at $rts$. Lemma~\ref{lem:simple} (part 1) guarantees strictness. 
\end{proof}

We now complete the proof by showing that the \farmopaque{} protocol satisfies strict serializability. 

\begin{lemma}[Strict Serializability]
    The \farmopaque{} protocol is strictly serializable. Read transactions are serialized at their read timestamp (rts), while write transactions are serialized at their write timestamp (wts). 
\end{lemma}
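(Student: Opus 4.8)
The plan is to exhibit a single, total serialization order and prove that the multi-version execution is equivalent to the serial execution in that order (each transaction reading the same versions), and then argue separately that this order respects real time. First I would assign every committed transaction $T$ a unique serialization point $ts_T$, equal to its read timestamp $rts$ if $T$ is read-only and to its write timestamp $wts$ if $T$ is read-write, breaking ties by a fixed deterministic rule (e.g.\ coordinator identifier) so that $<$ on the $ts_T$ is a total order. The goal is then to show that, ordering transactions by $ts_T$, each transaction reads exactly the versions it would read were the transactions executed one at a time in this order.

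For serializability I would split on transaction type. The read-only case is essentially already done: by the Strict Snapshot Reads lemma (Lemma~\ref{lem:si}), built on the Read Invariant (Lemma~\ref{lemma:read-invariant}), a read-only $T$ observes, for each object, the latest version with timestamp at most $rts = ts_T$, which is precisely the value produced by the last writer preceding $T$ in the serial order. The crux is the read-write case, where $T$ performs its reads at $rts$ but is serialized at $wts > rts$. Here the Write Invariant (Lemma~\ref{lem:write-invariant}) is the essential tool: since no object in $T$'s data set (read set $\cup$ write set) ever carries a version in $(rts, wts]$, the snapshot $T$ observes at $rts$ is identical to the snapshot at $wts$, so reading at $rts$ is equivalent to reading at the serialization point. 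I would also check that $T$'s own writes take effect at $wts$ and that any transaction $C$ ordered after $T$ sees those writes: if $C$ also touches a shared object, the Write Invariant applied to $C$'s data set forces every writer before $C$ (in particular $T$) to have timestamp at most $rts_C$, so $C$'s read at $rts_C$ indeed reflects $T$'s write. Read-modify-write objects and ties are handled by the same argument applied to the full data set.

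For strictness I would use part~1 of Lemma~\ref{lem:simple}, which places each acquired timestamp $U$ inside the interval of its \texttt{GET\_TS} call: the serialization point of any transaction is in the past by the time it is returned, hence $ts_A \le \mathcal{R}(\text{completion of } A)$, while the read timestamp of any transaction is at least the global time at which its \texttt{GET\_TS} was entered, hence $ts_B \ge rts_B \ge \mathcal{R}(\text{start of } B)$. Consequently, if $A$ completes before $B$ begins in real time, then $ts_A \le \mathcal{R}(\text{completion of } A) < \mathcal{R}(\text{start of } B) \le ts_B$, giving $ts_A < ts_B$ and thus real-time consistency of the serialization order.

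I expect the read-write case to be the main obstacle: the subtlety is that a read-write transaction takes its reads and its serialization point at two different timestamps, so equivalence to the serial execution is not immediate and must be bridged by the Write Invariant over the entire data set rather than just the write set. Confirming that writes installed at $wts$ are visible to exactly the transactions ordered after $T$ (and invisible to those ordered before), together with the bookkeeping for read-modify-write objects and tie-breaking among equal timestamps, is the fiddly part; all of these ultimately reduce to invoking the Read and Write Invariants on the appropriate data sets.
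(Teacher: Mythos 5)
Your proposal is correct and follows essentially the same route as the paper's proof: serialize read-only transactions at $rts$ and read-write transactions at $wts$, invoke the Strict Snapshot Reads lemma (built on the Read Invariant) for the snapshot at $rts$, and use the Write Invariant to rule out any version of a data-set object in $(rts, wts]$, so the snapshot remains valid at the serialization point. You merely spell out a few details the paper leaves implicit (tie-breaking, visibility of writes to later transactions, and the explicit real-time argument via part~1 of Lemma~\ref{lem:simple}, which the paper folds into Lemma~\ref{lem:si}).
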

\begin{proof}
    By Lemma~\ref{lem:si}, we know that read transactions can be (strictly) serialized at their $rts$. We now show that committed read-write transactions can be serialized at $wts$. 
    
	By Lemma~\ref{lem:si}, we know that read-write transactions work on a consistent snapshot of variables, performed at $rts$.  
	Lemma~\ref{lem:write-invariant} shows that, for every object accessed by the transaction, no versions may be written in the interval $(rts, wts]$. Alternatively, no committed transaction writing to objects accessed by $T$ may be serialized in the interval $(rts, wts]$. 
	This covers possible conflicts between the execution intervals of read-write transactions, and concludes the proof of strict serializability for \farmopaque{}.
\end{proof}

\subsection{(Not) Waiting Out Uncertainty}

It is tempting to ask whether it is possible to skip the waiting interval on the $\mathsf{GET\_TS}$ call while the locks are taken. 
Unfortunately, in this case serializability may be broken, as we show by the following counterexample. 

Assume we have two variables $A$ and $B$, both initially set to 0. We will instantiate four transactions. 
T1 executes on a machine with high uncertainty. T2, T3, T4 all execute on machines with low uncertainty. The transactions are precisely defined as follows:

\begin{enumerate}
    \item T1 executes $A \gets B + 1$.

    \item T2 executes $B \gets 1$.

    \item T3 reads $A$ and $B$.

    \item T4 reads $A$ and $B$.

\end{enumerate}

For simplicity, we will consider transactions T2, T3, T4 each executing in a single time step atomically, which is also their read timestamp, and for T2 the write timestamp. We now describe the execution at each global time step. The notation $X@t$ means we are object $X$ with version $t$. 

Initially, $A@0 = B@0 = 0$.

\begin{enumerate}

\item            T1 acquires read timestamp $1$, reads $B@0 = 0$.

\item            T1 locks $A$.

\item            T1 acquires write timestamp $9$. 

\item            T1 validates $B@0$.

\item            T1 writes $A@9 \gets 1$. 

\item            T1 unlocks $A$.

\item            T2 executes with read and write timestamp $7$, writes $B@7 \gets 1$.

\item            T3 executes with read timestamp $8$, reads $A@0 = 0, B@7 = 1$.

\item            Now T1's write set i.e. $A@9 = 1$ becomes visible.

\item            T4 executes with read timestamp $10$, reads $A@9 = 1, B@7 = 1$. 
\end{enumerate}

This execution is not serializable. T4 sees a state that is only possible if T1 serializes before T2, but T3 sees a state that is only possible if
T2 has committed but T1 has not yet committed. The problem is that T1's write set is unlocked while its write timestamp is still in the future. If
it had remained locked until the write timestamp was in the past, then T3 would abort and we would have a serializable execution with serialization
order (T1, T2, T4).

\paragraph{Discussion.} The above counterexample shows that serializability is violated if write set locks are released before the write timestamp
is known to be in the past. We can extend this counterexample in the following ways:

\begin{itemize}
	\item It is not correct to perform the pre-commit validation in parallel with the waiting out of the uncertainty. This can break the write invariant, which in turn implies a breaking of serializability by a modified version of this counterexample.
	
	\item The transactional protocol which reads in the past and writes without waiting out uncertainty is not correct. This breaks the property that locks are held at the global time corresponding to the write timestamp, which leads to a similar counterexample, breaking serializability. 
	
	\item The variant of the protocol which waits out uncertainty \emph{before taking locks on the write set} is not correct. This can lead to locks not being held at the write timestamp, which can be used to break serializability, again by a similar counterexample. 

\end{itemize}

\section{Related work}

In this section we compare \farmopaque{}'s design with some other
systems that support distributed transactions. We do not aim to cover
the entire space of distributed transaction protocols or
systems~\cite{Bernstein:1986:CCR:17299,Harding:2017:EDC:3055540.3055548}.
Instead we focus on a few systems that highlight differences in the use of
one-sided RDMA, strictness, serializability, opacity, availability,
and timestamp generation.

Most systems with distributed transactions and data partitioning use
RPCs to read remote objects during execution which requires CPU
processing at the remote participants. Traditional 2-phase commit also
requires processing of $PREPARE$ and $COMMIT$ messages at all
participants, including read-only
ones. Calvin~\cite{Thomson:2012:CFD:2213836.2213838} uses
deterministic locking of predeclared read and write sets to avoid 2PC
but read locks must still be taken using messages.
Sinfonia~\cite{sinfonia} can piggyback reads on the 2PC messages in
specialized cases to reduce the number of messages at read-only
participants.  Sundial~\cite{sundial} uses logical leases to
dynamically reorder transactions to minimize conflicts, and caching to
reduce remote accesses. This provides serializability but not
strictness, and lease renewals still require RPCs to read-only participants.

Systems that use one-sided RDMA reads can avoid CPU processing at
read-only participants. NAM-DB~\cite{namdb} uses RDMA reads during
execution and only takes write locks during commit, but it only
provides SI and not
serializability. DrTM~\cite{drtm_sosp,drtm_eurosys} provides
serializability by using hardware transactional memory (HTM) to detect
conflicting writes. FaRM uses an
additional validation phase with RDMA reads to detect conflicting
writes.

With traditional 2PC, if a coordinator fails, the system becomes
unavailable until it recovers. Spanner~\cite{spanner} replicates both
participants and coordinators to provide availability. FaRM and
RAMCloud~\cite{Lee:2015:ILL:2815400.2815416} replicate data but not
coordinators: they recover coordinator state for untruncated
transactions from participants.  In FaRM, transaction recovery is
parallelized across all machines and cores and runs concurrently with
new application transactions. Calvin replicates transactional inputs
and sequence numbers to all nodes in the system and re-executes
transactions issued since the last checkpoint. NAM-DB replicates
inputs of committed transactions but does not checkpoint or replicate
data and must re-execute all past transactions sequentially on failure
before the system becomes available.

Opacity requires consistent read snapshots during execution which can
be provided with pessimistic concurrency control, or with timestamp
ordering.  \farmorig{}, DrTM, RAMcloud, Sundial, and FaSST~\cite{fasst} use
OCC with per-object versions rather than global timestamps, and hence
do not provide opacity. NAM-DB uses timestamp vectors with one element
per server, read from a timestamp server and cached and reused
locally. NAM-DB does not provide strictness. Clock-SI~\cite{clocksi}
uses physical clocks at each server: remote reads must use RPCs that
block at the remote server until the local clock has passed the
transaction read timestamp. Clock-SI does not rely on a clock drift
bound for correctness but needs physical clocks to be loosely
synchronized for performance. It also does not provide strictness.

Spanner~\cite{spanner} and \farmopaque{} use timestamp ordering based
on real time with explicit uncertainty computed according to
Marzullo's algorithm~\cite{marzullo_time}, and both provide opacity. 
Unlike Spanner, \farmopaque{} does not
rely on globally synchronized hardware such as atomic clocks or GPS,
and it achieves two orders of magnitude lower uncertainty than Spanner
within the data center by exploiting fast RDMA networks. Spanner uses
pessimistic concurrency control for serializability whereas \farmopaque{} uses
OCC and supports one-sided RDMA reads.

Timestamp ordering and OCC have been used in many scale-up in-memory
systems, both software transactional memories (STMs) and in-memory
databases. They typically use shared-memory primitives such as CAS to
generate timestamps. TL2~\cite{conf/wdag/DiceSS06} and LSA
~\cite{conf/wdag/RiegelFF06} are some of the first STMs to use
timestamp ordering and OCC to provide strict serializability and
opacity.  TL2 is single-versioned and LSA is multi-versioned with
eager validation. Silo~\cite{silo_sosp13,silo_osdi14} uses OCC with
read-set validation and without opacity for strictly serializable
transactions, and timestamp ordering based on epochs for stale
snapshot reads with opacity. Hekaton~\cite{hekaton_vldb12} uses OCC
and timestamp ordering with both single- and multi-versioning.  It
also supports pessimistic concurrency control.

\section{Conclusion}
\farmopaque{} is a distributed transactional system that provides
opacity, high availability, high throughput, and low latency within a
data center. It uses timestamp ordering based on real time with clocks
synchronized to within tens of microseconds; a protocol to ensure
correctness across clock master failures; and a transactional protocol
that uses one-sided RDMA reads and writes instead of two-sided
messages.  \farmopaque{} achieves 5.4 million neworders/s on a TPC-C
transaction mix and can recover to full throughput within tens of
milliseconds of a failure. To our knowledge this is the highest
throughput reported for a system with opacity and high availability,
which are important to simplify distributed application development.

\section*{Acknowledgements}
We would like to thank Dmitry Bimatov, Paul Brett, Chiranjeeb
Buragohain, Wonhee Cho,  Ming-Chuan Wu, Joshua Cowhig, Dana Cozmei, Orion Hodson, Karthik
Kalyanaraman, Richie Khanna, Alan Lawrence, Ed Nightingale, Greg
O'Shea, John Pao, Knut Magne Risvik, Tim Tan, and Shuheng Zheng for
their contributions to A1 and FaRM. We would also like to thank the anonymous reviewers for
their feedback.

\bibliographystyle{acm}
\bibliography{paper}

\end{document}